\title{On the Parameterized Complexity of Diverse SAT}
\titlerunning{On the Parameterized Complexity of Diverse SAT} 
\author{Neeldhara Misra}{Department of Computer Science and Engineering \and Indian Institute of Technology Gandhinagar \and \href{https://www.neeldhara.com}{https://www.neeldhara.com} }{neeldhara.m@iitgn.ac.in}{}{Supported by IIT Gandhinagar.}
\author{Harshil Mittal}{Department of Computer Science and Engineering \and Indian Institute of Technology Gandhinagar}{mittal\_harshil@iitgn.ac.in}{}{Supported by IIT Gandhinagar.}
\author{Ashutosh Rai}{Department of Mathematics \and Indian Institute of Technology Delhi\and \href{https://web.iitd.ac.in/~raiashutosh}{https://web.iitd.ac.in/~raiashutosh}}{ashutosh.rai@maths.iitd.ac.in}{}{Supported by IIT Delhi.}
\authorrunning{N. Misra, H. Mittal and A. Rai} 
\keywords{Diverse solutions, Affine formulas, $2$-CNF formulas, Hitting formulas} 
\newcommand{\Oplus}{\ensuremath{\vcenter{\hbox{\scalebox{1.5}{$\oplus$}}}}}
\providecommand\@dotsep{5}
\def\listtodoname{}
\def\listoftodos{\@starttoc{tdo}\listtodoname}
\newcounter{nmcomment}
\newcommand{\FPT}{\ensuremath{\mathsf{FPT}}\xspace}
\newcommand{\WOH}{\ensuremath{\mathsf{W}[1]}-hard\xspace}
\newcommand{\NP}{\ensuremath{\mathsf{NP}}\xspace}
\newcommand{\NPH}{\ensuremath{\mathsf{NP}}-hard\xspace}
\newcommand{\NPC}{\ensuremath{\mathsf{NP}}-complete\xspace}
\begin{document}

\maketitle
Note: This is full version of the corresponding ISAAC 2024 paper \cite{misra2024parameterized}.
\begin{abstract}
We study the \textsc{Boolean Satisfiability problem (SAT)} in the framework of diversity, where one asks for multiple solutions that are mutually far apart (i.e., sufficiently dissimilar from each other) for a suitable notion of distance/dissimilarity between solutions. Interpreting assignments as bit vectors, we take their Hamming distance to quantify dissimilarity, and we focus on the problem of finding two solutions. Specifically, we define the problem \textsc{Max}  \textsc{Differ SAT} (resp. \textsc{Exact Differ SAT}) as follows: Given a Boolean formula $\phi$ on $n$ variables, decide whether $\phi$ has two satisfying assignments that differ on at least (resp. exactly) $d$ variables. We study the classical and parameterized (in parameters $d$ and $n-d$) complexities of \textsc{Max Differ SAT} and \textsc{Exact Differ SAT}, when restricted to some classes of formulas on which SAT is known to be polynomial-time solvable. In particular, we consider affine formulas, Krom formulas (i.e., $2$-CNF formulas) and hitting formulas.

For affine formulas, we show the following: Both problems are polynomial-time solvable when each equation has at most two variables. \textsc{Exact Differ SAT} is \NPH, even when each equation has at most three variables and each variable appears in at most four equations. Also, \textsc{Max Differ SAT} is \NPH, even when each equation has at most four variables. Both problems are \WOH in the parameter $n-d$. In contrast, when parameterized by $d$, \textsc{Exact Differ SAT} is \WOH, but \textsc{Max Differ SAT} admits a single-exponential \FPT algorithm and a polynomial-kernel. 

For Krom formulas, we show the following: Both problems are polynomial-time solvable when each variable appears in at most two clauses. Also, both problems are \WOH in the parameter $d$ (and therefore, it turns out, also \NPH), even on monotone inputs (i.e., formulas with no negative literals). Finally, for hitting formulas, we show that both problems can be solved in polynomial-time.
\end{abstract}

\section{Introduction}
We initiate a study of the problem of finding two satisfying assignments to an instance of SAT, with the goal of maximizing the number of variables that have different truth values under the two assignments, in the parameterized setting. This question is motivated by the broader framework of finding ``diverse solutions'' to optimization problems. When a real-world problem is modelled as a computational problem, some contextual side-information is often lost. So, while two solutions may be equally good for the theoretical formulation, one of them may be better than the other for the actual practical application. A natural fix is to provide multiple solutions (instead of just one solution) to the user, who may then pick the solution that best fulfills her/his need. However, if the solutions so provided are all quite similar to each other, they may exhibit almost identical behaviours when judged on the basis of any relevant external factor. Thus, to ensure that the user is able to meaningfully compare the given solutions and hand-pick one of them, she/he must be provided a collection of \emph{diverse solutions}, i.e., a few solutions that are sufficiently dissimilar from each other. This framework of \emph{diversity} was proposed by Baste et. al. \cite{baste2022diversity}. Since the late 2010s, several graph-theoretic and matching problems have been studied in this setting from an algorithmic standpoint. These include diverse variants of vertex cover \cite{baste2019fpt}, feedback vertex set \cite{baste2019fpt}, hitting set \cite{baste2019fpt}, perfect/maximum matching \cite{fomin2020diverse}, stable matching \cite{ganesh2021disjoint}, weighted basis of matroid \cite{fomin2023diverse}, weighted common independent set of matroids \cite{fomin2023diverse}, minimum $s$-$t$ cut \cite{de2023finding}, spanning tree \cite{hanaka2021finding} and non-crossing matching \cite{misra2022diverse}.

The \textsc{Boolean Satisfiability problem} (SAT) asks whether a given Boolean formula has a satisfying assignment. This problem serves a crucial role in complexity theory \cite{karp2021reducibility}, cryptography \cite{mironov2006applications} and artificial intelligence \cite{vizel2015boolean}. In the early 1970s, SAT became the first problem proved to be \NPC in independent works of Cook \cite{cook2023complexity} and Levin \cite{levin1973universal}. Around the same time, Karp \cite{karp2021reducibility} built upon this result by showing \NP-completeness of twenty-one graph-theoretic and combinatorial problems via reductions from SAT. In the late 1970s, Schaefer \cite{schaefer1978complexity} formulated the closely related  \textsc{Generalized Satisfiability problem} (SAT($S$)), where each constraint applies on some variables, and it forces the corresponding tuple of their truth-values to belong to a certain Boolean relation from a fixed finite set $S$. His celebrated dichotomy result listed six conditions such that SAT($S$) is  polynomial-time solvable if $S$ meets one of them; otherwise, SAT($S$) is \NPC. 

Since SAT is \NPC, it is unlikely to admit a polynomial-time algorithm, unless $\mathsf{P} = \NP$. Further, in the late 1990s, Impaglliazo and Paturi \cite{impagliazzo2001complexity} conjectured that SAT is unlikely to admit even sub-exponential time algorithms, often referred to as the \emph{exponential-time hypothesis}. To cope with the widely believed hardness of SAT, several special classes of Boolean formulas have been identified for which SAT is polynomial-time solvable. In the late 1960s, Krom \cite{krom1967decision} devised a quadratic-time algorithm to solve SAT on 2-CNF formulas. In the late 1970s, follow-up works of Even et. al. \cite{even1975complexity} and Aspvall et. al. \cite{aspvall1979linear} proposed linear-time algorithms to solve SAT on $2$-CNF formulas. These algorithms used limited back-tracking and analysis of the strongly-connected components of the implication graph respectively. In the late 1980s, Iwama \cite{iwama1989cnf} introduced the class of hitting formulas, for which he gave a closed-form expression to count the number of satisfying assignments in polynomial-time. It is also known that SAT can be solved in polynomial-time on affine formulas using Gaussian elimination \cite{grcar2011ordinary}. Some other polynomial-time recognizable classes of formulas for which SAT is polynomial-time solvable include Horn formulas \cite{dowling1984linear, scutella1990note}, CC-balanced formulas \cite{conforti1994balanced}, matched formulas \cite{franco2003perspective}, renamable-Horn formulas \cite{lewis1978renaming} and $q$-Horn DNF formulas \cite{boros1990polynomial, boros1994recognition}.

\textbf{Diverse variant of SAT}. In this paper, we undertake a complexity-theoretic study of SAT in the framework of diversity. We focus on the problem of finding a \emph{diverse pair of satisfying assignments} of a given Boolean formula, and we take the number of variables on which the two assignments differ as a measure of dissimilarity between them. Specifically, we define the problem \textsc{Max} \textsc{Differ} SAT (resp. \textsc{Exact} \textsc{Differ} SAT) as follows: Given a Boolean formula $\phi$ on $n$ variables and a non-negative integer $d$, decide whether there are two satisfying assignments of $\phi$ that differ on at least $d$ (resp. exactly $d$) variables. That is, this problem asks whether there are two satisfying assignments of $\phi$ that overlap on at most $n-d$ (resp. exactly $n-d$) variables. Note that SAT can be reduced to its diverse variant by setting $d$ to $0$. Thus, as SAT is \NPH in general, so is  \textsc{Max/Exact Differ SAT}. So, it is natural to study the diverse variant on those classes of formulas for which SAT is polynomial-time solvable. In particular, we consider affine formulas, $2$-CNF formulas and hitting formulas. We refer to the corresponding restrictions of \textsc{Max/Exact} \textsc{Differ} SAT as \textsc{Max/Exact} \textsc{Differ} \textsc{Affine}-SAT, \textsc{Max/Exact} \textsc{Differ} 2-SAT  and \textsc{Max/Exact Differ Hitting-SAT} respectively. We analyze the classical and parameterized (in parameters $d$ and $n-d$) complexities of these problems.

\textbf{Related work}. This paper is not the first one to study algorithms to determine the maximum number of variables on which two solutions of a given SAT instance can differ. Several exact exponential-time algorithms are known to find a pair of maximally far-apart satisfying assignments. In the mid 2000s, Angelsmark and Thapper \cite{angelsmark2004algorithms} devised an $\mathcal{O}(1.7338^n)$ time algorithm to solve \textsc{Max Hamming Distance 2-SAT}. Their algorithm involved a careful analysis of the micro-structure graph and used a solver for weighted 2-SAT as a sub-routine. Around the same time, Dahl\"off \cite{dahllof2005algorithms} proposed an $\mathcal{O}(1.8348^n)$ time algorithm for \textsc{Max Hamming Distance XSAT}. In the late 2010s, follow-up works of Hoi et. al. \cite{hoi2019measure, hoi2019fast} developed algorithms for the same problem with improved running times, i.e., $\mathcal{O}(1.4983^n)$ for the general case, and $\mathcal{O}(1.328^n)$ for the case when every clause has at most three literals.

\begin{table}[]
\begin{tabular}{l|l|l|l}

                                                     & Classical complexity     & Parameter $d$ & Parameter $n-d$ \\ \hline
Affine formulas         & \NPH, even on $(3,4)$-affine formulas &     \WOH         &     \WOH            \\
& (\cref{Exact Differ Affine-SAT NP-hard}) & (\cref{Exact Differ Affine-SAT W[1]-hard in d}) & (\cref{Exact Differ Affine-SAT W[1]-hard in n-d})

\\
& Polynomial-time on $2$-affine formulas& & \\
& (\cref{Exact Differ Affine-SAT poly-time on 2-affine formulas}) & & \\
\hline
2-CNF formulas           & Polynomial-time on $(2,2)$-CNF formulas &      \WOH  &   ~~~~~~~~~~~~?              \\
&  (\cref{Exact Differ 2-SAT poly-time on (2 2)-CNF formulas})   &  (\cref{Exact Differ 2-SAT W[1]-hard in d})               & 
\\
\hline
Hitting formulas & Polynomial-time & ~~~~~~~~$~_{\mathlarger{-}}$ & ~~~~~~~~~~~$~_{\mathlarger{-}}$\\
&
(\cref{Exact Differ Hitting-SAT poly-time}) &  & \\
\end{tabular}
\caption{Classical and parameterized (in parameters $d$ and $n-d$) complexities of ${\textsc{Exact Differ SAT}}$, when restricted to affine formulas, 2-CNF formulas and hitting formulas.}
\label{table:1} 
\begin{tabular}{l|l|l|l}

                                                     & Classical complexity     & Parameter $d$ & Parameter $n-d$ \\ \hline
Affine formulas           & \NPH, even on $4$-affine formulas &    Single-exponential \FPT           &  \WOH               \\
& (\cref{Max Differ Affine-SAT NP-hard}) &     (\cref{Max Differ Affine-SAT FPT in d})                      & (\cref{Exact Differ Affine-SAT W[1]-hard in n-d}) \\
 & Polynomial-time on $2$-affine formulas & Polynomial kernel &  
\\
& (\cref{Exact Differ Affine-SAT poly-time on 2-affine formulas}) & (\cref{Max Differ Affine-SAT poly-kernel in d})    &
\\
\hline
2-CNF formulas &  Polynomial-time on $(2,2)$-CNF formulas      & \WOH               & ~~~~~~~~~~~~?
\\
&  (\cref{Max Differ 2-SAT poly-time (2 2)-CNF formulas})   &  (\cref{Exact Differ 2-SAT W[1]-hard in d})               &
\\
\hline
Hitting formulas & Polynomial-time & ~~~~~~~~~~~~~~$~_{\mathlarger{-}}$  & ~~~~~~~~~~~$~_{\mathlarger{-}}$ \\
& (\cref{Exact Differ Hitting-SAT poly-time}) & & \\
\end{tabular}
\caption{Classical and parameterized (in parameters $d$ and $n-d$) complexities of \textsc{Max Differ SAT}, when restricted to affine formulas, 2-CNF formulas and hitting formulas.}
\label{table:2}
\end{table}

\textbf{Parameterized complexity}. In the 1990s, Downey and Fellows \cite{downey1992fixed} laid the foundations of parameterized algorithmics. This framework measures the running time of an algorithm as a function of both the input size and a \emph{parameter} $k$, i.e., a suitably chosen attribute of the input. Such a fine-grained analysis helps to cope with the lack of polynomial-time algorithms for \NPH problems by instead looking for an algorithm with running time whose super-polynomial explosion is confined to the parameter $k$ alone. That is, such an algorithm has a running time of the form $f(k)\cdot n^{\mathcal{O}(1)}$, where $f(\cdot)$ is any computable function (could be exponential, or even worse) and $n$ denotes the input size. Such an algorithm is said to be \emph{fixed-parameter tractable} (\FPT) because its running time is polynomially-bounded for every fixed value of the parameter $k$. We refer readers to the textbook `Parameterized Algorithms' by Cygan et. al. \cite{cygan2015parameterized} for an introduction to this field.

\textbf{Our findings}.
We summarize our findings in \cref{table:1} and \cref{table:2}. In \cref{sec: affine formulas}, we show that
\begin{itemize}[noitemsep, nosep]
\item \textsc{Exact Differ Affine-SAT} is \NPH, even on $(3,4)$-affine formulas,
\item \textsc{Max Differ Affine-SAT} is \NPH, even on $4$-affine formulas, 
\item \textsc{Exact/Max Differ Affine-SAT} is polynomial-time solvable on $2$-affine formulas,
\item \textsc{Exact Differ Affine-SAT} is \WOH in the parameter $d$,
\item \textsc{Max Differ Affine-SAT} admits a single-exponential \FPT algorithm in the parameter $d$,
\item \textsc{Max Differ Affine-SAT} admits a polynomial kernel in the parameter $d$, and
\item \textsc{Exact/Max Differ Affine-SAT} is \WOH in the parameter $n-d$.
\end{itemize}
In \cref{sec: 2-CNF formulas}, we show that \textsc{Exact/Max Differ 2-SAT} can be solved in polynomial-time on $(2,2)$-CNF formulas, and \textsc{Exact/Max Differ 2-SAT} is \WOH in the parameter $d$. In \cref{sec: hitting formulas}, we show that \textsc{Exact/Max Differ Hitting-SAT} is polynomial-time solvable. 

\section{Preliminaries}
A Boolean variable can take one of the two truth values: $0$ (False) and $1$ (True). We use $n$ to denote the number of variables in a Boolean formula $\phi$. An \emph{assignment} of $\phi$ is a mapping from the set of all its $n$ variables to $\{0,1\}$. A \emph{satisfying assignment} of $\phi$ is an assignment $\sigma$ such that $\phi$ evaluates to $1$ under $\sigma$, i.e., when every variable $x$ is substituted with its assigned truth value $\sigma(x)$. We say that two assignments $\sigma_1$ 
 and $\sigma_2$ \emph{differ} on a variable $x$ if they assign different truth values to $x$. That is, one of them sets $x$ to $0$, and the other sets $x$ to $1$. Otherwise, we say that $\sigma_1$ and $\sigma_2$ \emph{overlap} on $x$. That is, either both of them set $x$ to $0$, or both of them set $x$ to $1$. 
 
 A \emph{literal} is either a variable $x$ (called a \emph{positive literal}) or its negation $\neg x$ (called a \emph{negative literal}). A \emph{clause} is a disjunction (denoted by $\vee$) of literals. A Boolean formula in \emph{conjunctive normal form}, i.e., a conjunction (denoted by $\wedge$) of clauses, is called a \emph{CNF formula}. A $2$-\emph{CNF formula} is a CNF formula with at most two literals per clause. A $(2,2)$-\emph{CNF formula} is a $2$-CNF formula in which each variable appears in at most two clauses. An \emph{affine formula} is a conjunction of linear equations over the two-element field $\mathbb{F}_2$. We use $\oplus$ to denote the XOR operator, i.e., addition-modulo-2. A $2$-\emph{affine formula} is an affine formula in which each equation has at most two variables. Similarly, a $3$-\emph{affine} (resp. $4$-\emph{affine}) \emph{formula} is an affine formula in which each equation has at most three (resp. four) variables. A $(3,4)$-\emph{affine formula} is a $3$-affine formula in which each variable appears in at most four equations. 

The solution set of a system of linear equations can be obtained in polynomial-time using Gaussian elimination \cite{grcar2011ordinary}. It may have no solution, a unique solution or multiple solutions. When it has multiple solutions, the solution set is described as follows: Some variables are allowed to take any value; we call them \emph{free variables}. The remaining variables take values that are dependent on the values taken by the free variables; we call them \emph{forced variables}. That is, the value taken by any forced variable is a linear combination of the values taken by some free variables. For example, consider the following system of three linear equations over $\mathbb{F}_2$: $x\oplus y\oplus z=1$, $u\oplus y=1, w\oplus z=1$. This system has multiple solutions, and its solution set can be described as $\big\{\big(x,y,z,u,w\big)~|~y\in \mathbb{F}_2, z\in \mathbb{F}_2,  x=y\oplus z\oplus 1,  u=y\oplus 1, 
w=z\oplus 1 \big\}$. Here, $y$ and $z$ are free variables. The remaining variables, i.e., $x, u$ and $w$, are forced variables.

A \emph{hitting formula} is a CNF formula such that for any pair of its clauses, there is some variable that appears as a positive literal in one clause, and as a negative literal in the other clause. That is, no two of its clauses can be simultaneously falsified. Note that the number of unsatisfying assignments of a hitting formula $\phi$ on $n$ variables can be expressed as follows: 
\begin{equation*}
\underset{C:~C \mbox{ \small{is a clause of} } \phi}{\sum}\big|\big\{\sigma~|~\sigma \mbox{ is an assignment of } \phi \mbox{ that falsifies } C\big\}\big| = \underset{C:~C \mbox{ \small{is a clause of} } \phi}{\sum}2^{n-|vars(C)|}
\end{equation*}
Here, we use $vars(C)$ to denote the set of all variables that appear in the clause $C$. 

We use the following as source problems in our reductions:
\begin{itemize}[noitemsep]
\item \textsc{Independent Set}. Given a graph $G$ and a positive integer $k$, decide whether $G$ has an independent set of size $k$. This problem is known to be \NPH on cubic graphs \cite{mohar2001face}, and \WOH in the parameter $k$ \cite{downey2013fundamentals}.
\vspace{0.1 cm}
\item \textsc{Multicolored Clique}. Given a graph $G$ whose vertex set is partitioned into $k$ color-classes, decide whether $G$ has a $k$-sized clique that picks exactly one vertex from each color-class. This problem is known to be \NPH on $r$-regular graphs \cite{cygan2015parameterized}.
\vspace{0.1 cm}
\item \textsc{Exact Even Set}. Given a universe $\mathcal{U}$, a family $\mathcal{F}$ of subsets of $\mathcal{U}$ and a positive integer $k$, decide whether there is a set $X\subseteq \mathcal{U}$ of size exactly $k$ such that $|X\cap S|$ is even for all sets $S$ in the family $\mathcal{F}$. This problem is known to be \WOH in the parameter $k$ \cite{downey1999parametrized}. \vspace{0.1 cm}
\item \textsc{Odd Set} (resp. \textsc{Exact Odd Set}). Given a universe $\mathcal{U}$, a family $\mathcal{F}$ of subsets of $\mathcal{U}$ and a positive integer $k$, decide whether there is a set $X\subseteq \mathcal{U}$ of size at most $k$ (resp. exactly $k$) such that $|X\cap S|$ is odd for all sets $S$ in the family $\mathcal{F}$. Both these problems are known to be \WOH in the parameter $k$ \cite{downey1999parametrized}.
\end{itemize}
We use a polynomial-time algorithm for the following problem as a sub-routine: 
\begin{itemize}[noitemsep]
\item \textsc{Subset Sum problem}. Given a multi-set of integers $\big\{w_1,\ldots, w_p\big\}$ and a target sum $k$, decide whether there exists $X\subseteq [p]$ such that $\sum_{i\in X}w_i = k$. This problem is known to be polynomial-time solvable when the input integers are specified in unary \cite{koiliaris2019faster}.
\end{itemize}
We use the notation $\mathcal{O}^{\star}(\cdot)$ to hide polynomial factors in running time.
\section{Affine formulas}
\label{sec: affine formulas}

In this section, we focus on \textsc{Exact Differ Affine-SAT}, i.e, finding two different solutions to affine formulas. To begin with, we show that finding two solutions that differ on \emph{exactly} $d$ variables is hard even for $(3,4)$-affine formulas: recall that these are instances where every equation has at most three variables and every variable appears in at most four equations.

\begin{theorem}
\label{Exact Differ Affine-SAT NP-hard} \textsc{Exact Differ Affine-SAT} is \NPH, even on $(3,4)$-affine formulas.
\end{theorem}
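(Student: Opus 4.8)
The plan is to reduce from \textsc{Independent Set} on cubic graphs, which is \NPH. The starting point is a linear-algebraic observation: if $\sigma_1,\sigma_2$ are satisfying assignments of an affine system $Ax=b$, then $\sigma_1\oplus\sigma_2$ lies in the kernel of $A$ (the solution set of the homogeneous system $Ax=0$), and conversely $\sigma_1\oplus z$ is again satisfying for every $z$ in this kernel. Hence, as soon as $\phi$ is satisfiable, $\phi$ has two satisfying assignments differing on \emph{exactly} $d$ variables if and only if the $\mathbb{F}_2$-solution space of its homogeneous part contains a vector of Hamming weight exactly $d$. So it suffices to build a $(3,4)$-affine \emph{homogeneous} system (satisfiability then comes for free from the all-zero assignment) whose solution space contains a weight-$d$ vector iff a given cubic graph $G=(V,E)$ has an independent set of size $k$.

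Given $G$ with $m:=|E|$, fix a large integer weight $W:=2m$. For each vertex $v$ introduce $W$ copies $x_v^1,\dots,x_v^W$ tied together by a path of equality equations $x_v^i\oplus x_v^{i+1}=0$ ($1\le i<W$); for each vertex reserve three distinct low-indexed copies, one for each of its three incident edges. For each edge $e=uv$ introduce a fresh variable $z_e$ and the equation $x_u^{i_e}\oplus x_v^{j_e}\oplus z_e=0$, using the copies of $u$ and of $v$ reserved for $e$. Every equation has at most three variables; $z_e$ occurs once, and each copy $x_v^i$ occurs in at most two equality equations plus at most one edge equation, hence at most three times overall (this is why the copies are linked in a path rather than a star, and why the reserved copies sit near the ends of that path). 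Thus $\phi$ is in fact a $(3,3)$-affine formula, and in particular $(3,4)$-affine. A solution of the homogeneous system is exactly a choice of a master value $x_v\in\mathbb{F}_2$ per vertex (all copies inherit it) together with the forced values $z_e=x_u\oplus x_v$; writing $S=\{v:x_v=1\}$, such a solution has weight $W|S|+|\partial S|$, where $\partial S$ is the edge boundary of $S$. Since $G$ is cubic, $|\partial S|=3|S|-2e(S)$ where $e(S)$ is the number of edges inside $S$, so the weight equals $(W+3)|S|-2e(S)$. Set $d:=(W+3)k$.

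For correctness: if $S$ is independent with $|S|=k$ then $e(S)=0$ and the associated solution has weight exactly $(W+3)k=d$. Conversely, a weight-$d$ solution yields a set $S$ with $(W+3)(|S|-k)=2e(S)$; since $0\le e(S)\le m$ and $W+3=2m+3>2m\ge 2e(S)$, the only possibility is $|S|=k$ and $e(S)=0$, i.e.\ $S$ is an independent set of size exactly $k$. This gives the equivalence, so \textsc{Exact Differ Affine-SAT} is \NPH on $(3,4)$-affine formulas (and, as two assignments can be verified in polynomial time, \NPC). The routine linear algebra and degree counting aside, the two points needing care are exactly the two design choices above: (i) choosing $W$ large enough that hitting the target weight \emph{exactly} can only happen through an independent set of \emph{exactly} size $k$ — ruling out larger vertex sets that compensate with internal edges — and (ii) realising the vertex ``weights'' by copies without exceeding the $(3,4)$ arity/occurrence budget.
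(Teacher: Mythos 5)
Your reduction is correct and is essentially the paper's own: the same source problem (\textsc{Independent Set} on cubic graphs), the same vertex-copies-plus-edge-XOR gadget, and the same weight arithmetic culminating in $(W+3)\cdot(|S|-k)=2\cdot e(S)$ with the cubic degree identity. Two refinements are worth noting: the kernel/coset observation lets you argue about a single homogeneous solution instead of a pair of assignments, which streamlines the reverse direction, and by spreading each vertex's three edge occurrences over distinct reserved copies you lower the occurrence bound from four to three, so you in fact establish hardness already for $(3,3)$-affine formulas.
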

\begin{proof} We describe a reduction from \textsc{Independent Set on Cubic graphs}. Consider an instance $(G,k)$ of \textsc{Independent Set}, where $G$ is a cubic graph. We construct an affine formula $\phi$ as follows: For every vertex $v\in V(G)$, introduce a variable $x_v$, its $3k$ copies (say $x_v^1,\ldots, x_v^{3k}$), and $3k$ equations: $x_v \oplus x_v^{1} = 0$, $x_{v}^1 \oplus x_v^2 = 0, \ldots, x_v^{3k-1}\oplus x_v^{3k} = 0$. For every edge $e=uv\in E(G)$, introduce variable $y_e$ and equation $x_u \oplus x_v \oplus y_e = 0$. We set $d=k\cdot(3k+4)$.
For every vertex $v\in V(G)$, the variable $x_v$ appears in four equations (i.e., $x_v\oplus x_v^1=0$ and the three equations corresponding to the three edges incident to $v$ in $G$), each of $x_v^1, \ldots, x_v^{3k-1}$ appears in two equations, and $x_v^{3k}$ appears in one equation. For every edge $e\in E(G)$, the variable $y_e$ appears in one equation. So, overall, every variable appears in at most four equations. Also, the equation corresponding to any edge contains three variables, and the remaining equations contain two variables each. Therefore, $\phi$ is a $(3,4)$-affine formula.

Now, we prove that $(G,k)$ is a YES instance of \textsc{Independent Set} if and only if $(\phi, d)$ is a YES instance of \textsc{Exact Differ Affine-SAT}. At a high level, we argue this equivalence as follows: In the forward direction, we show that the two desired satisfying assignments are the all $0$ assignment, and the assignment that i) assigns $1$ to every $x$ variable (and also, its $3k$ copies) that corresponds to a vertex of the independent set, ii) assigns $1$ to every $y$ variable that corresponds to an edge that has one endpoint inside the independent set and the other endpoint outside it, iii) assigns $0$ to every $x$ variable (and also, its $3k$ copies) that corresponds to a vertex outside the independent set, and iv) assigns $0$ to every $y$ variable that corresponds to an edge that has both its endpoints outside the independent set. In the reverse direction, we show that the desired $k$-sized independent set consists of those vertices that correspond to the $x$ variables on which the two assignments differ. We make this argument precise below.

\textbf{Forward direction}. Suppose that $G$ has a $k$-sized independent set, say $S$. Let $\sigma_1$ and $\sigma_2$ be assignments of $\phi$ defined as follows: For every vertex $v\in V(G)\setminus S$, both $\sigma_1$ and $\sigma_2$ set $x_v, x_v^1, \ldots, x_v^{3k}$ to $0$. For every vertex $v\in S$, $\sigma_1$ sets $x_v, x_v^1,\ldots, x_v^{3k}$ to $0$, and $\sigma_2$ sets $x_v, x_v^1, \ldots, x_v^{3k}$ to $1$. For every edge $e\in E(G)$ that has both its endpoints in $V(G)\setminus S$, both $\sigma_1$ and $\sigma_2$ set $y_e$ to $0$. For every edge $e\in E(G)$ that has one endpoint in $S$ and the other endpoint in $V(G)\setminus S$, $\sigma_1$ sets $y_e$ to $0$, and $\sigma_2$ sets $y_e$ to $1$. 

As $\sigma_1$ sets all variables to $0$, it is clear that it satisfies $\phi$. Now, we show that $\sigma_2$ satisfies $\phi$. Consider any edge $e=uv\in E(G)$ and its corresponding equation $x_u \oplus x_v \oplus y_e = 0$. If both endpoints of $e$ belong to $V(G)\setminus S$, then $\sigma_2$ sets $x_u$, $x_v$ and $y_e$ to $0$. Also, if $e$ has one endpoint (say $u$) in $S$, and the other endpoint in $V(G)\setminus S$, then $\sigma_2$ sets $x_u$ to $1$, $x_v$ to $0$ and $y_e$ to $1$. Therefore, in both cases, $x_u \oplus x_v \oplus y_e$ takes the truth value $0$ under $\sigma_2$. Also, for any vertex $v\in V(G)$, since $\sigma_2$ gives the same truth value to $x_v, x_v^1, \ldots, x_v^{3k}$ (i.e., all $1$ if $v\in S$, and all $0$ if $v\in V(G)\setminus S$), it also satisfies the equations $x_v\oplus x_v^1=0, x_v^1\oplus x_v^2=0,\ldots, x_v^{3k-1}\oplus x_v^{3k}=0$. Thus, $\sigma_2$ is a satisfying assignment of $\phi$.

As $G$ is a cubic graph, every vertex in $S$ is incident to three edges in $G$. Also, as $S$ is an independent set, none of these edges has both endpoints in $S$. Therefore, there are $3\cdot |S|$ edges that have one endpoint in $S$ and the other endpoint in $V(G)\setminus S$. Note that $\sigma_1$ and $\sigma_2$ differ on the $y$ variables that correspond to these $3\cdot |S|$ edges. Also, they differ on $|S|$ many $x$ variables, and their $3k\cdot |S|$ copies. Therefore, overall, they differ on $(3k+1)\cdot |S| + 3\cdot |S| = k\cdot (3k+4)$ variables. Hence, $(\phi,d)$ is a YES instance of \textsc{Exact Differ Affine SAT}.

\textbf{Reverse direction}. Suppose that $(\phi,d)$ is a YES instance of \textsc{Exact Differ Affine-SAT}. That is, there exist satisfying assignments $\sigma_1$ and $\sigma_2$ of $\phi$ that differ on $k\cdot (3k+4)$ variables. Let $S:=\big\{v\in V(G)~|~\sigma_1 \mbox{ and } \sigma_2 \mbox{ differ on } x_v\big\}$. We show that $S$ is a $k$-sized independent set of $G$. Let $e(S, \bar{S})$ denote the number of edges in $G$ that have one endpoint in $S$ and the other endpoint in $V(G)\setminus S$. Now, let us express the number of variables on which $\sigma_1$ and $\sigma_2$ differ in terms of $|S|$ and $e(S,\bar{S})$. 

Consider any edge $e=uv\in E(G)$. First, suppose that $e$ has both its endpoints in $S$. Then, as $\sigma_1$ and $\sigma_2$ differ on both $x_u$ and $x_v$, the expression $x_u \oplus x_v$ takes the same truth value under $\sigma_1$ and $\sigma_2$. So, as both of them satisfy the equation $x_u\oplus x_v\oplus y_e=0$, it follows that $\sigma_1$ and $\sigma_2$ must overlap on $y_e$. Next, suppose that $e$ has both its endpoints in $V(G)\setminus S$. Then, as $\sigma_1$ and $\sigma_2$ overlap on both $x_u$ and $x_v$, the expression $x_u\oplus x_v$ takes the same truth value under $\sigma_1$ and $\sigma_2$. So, again, $\sigma_1$ and $\sigma_2$ must overlap on $y_e$. Next, suppose that $e$ has one endpoint (say $u$) in $S$ and the other endpoint in $V(G)\setminus S$. Then, as $\sigma_1$ and $\sigma_2$ differ on $x_u$ and overlap on $x_v$, the expression $x_u\oplus x_v$ takes different truth values under $\sigma_1$ and $\sigma_2$. So, as both $\sigma_1$ and $\sigma_2$ satisfy the equation $x_u\oplus x_v\oplus y_e=0$, it follows that $\sigma_1$ and $\sigma_2$ must differ on $y_e$. So, overall, $\sigma_1$ and $\sigma_2$ differ on $e(S,\bar{S})$ many $y$ variables.

For any vertex $v\in V(G)$, since any satisfying assignment satisfies the equations $x_v \oplus x_v^1 = 0, x_v^1 \oplus x_v^2 = 0, \ldots, x_v^{3k-1}\oplus x_v^{3k}=0$, it must assign the same truth value to $x_v, x_v^1, \ldots, x_v^{3k}$. So, for any $v\in S$, as $\sigma_1$ and $\sigma_2$ differ on $x_v$, they also differ on $x_v^1, \ldots, x_v^{3k}$. Similarly, for any $v\in V(G)\setminus S$, as $\sigma_1$ and $\sigma_2$ overlap on $x_v$, they also overlap on $x_v^1,\ldots x_v^{3k}$. So, overall, $\sigma_1$ and $\sigma_2$ differ on $|S|$ many $x$ variables and their $3k\cdot |S|$ copies. Now, summing up the numbers of $y$ variables and $x$ variables (and their copies) on which $\sigma_1$ and $\sigma_2$ differ, we get 
\begin{equation}
\label{eq1}
e(S,\bar{S})+ (3k+1)\cdot |S| = k\cdot (3k+4)
\end{equation}
Let $e(S,S)$ denote the number of edges in $G$ that have both endpoints in $S$. Note that $$\sum_{v\in S}\mbox{degree}_G(v) = 2\cdot e(S,S)+ e(S,\bar{S})$$
Also, as $G$ is a cubic graph, we know that $\mbox{degree}_G(v)=3$ for all $v\in S$. Therefore, we get $e(S,\bar{S}) = 3\cdot |S| - 2\cdot e(S,S)$. Putting this expression for $e(S,\bar{S})$ in \cref{eq1}, we have
\begin{equation}
\label{eq2}
(3k+4)\cdot \big(|S|-k\big) = 2\cdot e(S,S)
\end{equation}
If $|S|\geq k+1$, then LHS of \cref{eq1} becomes $\geq (3k+1)\cdot (k+1) = k\cdot (3k+4) + 1$, which is greater than its RHS. So, we must have $|S|\leq k$. Also, as RHS of \cref{eq2} is non-negative, so must be its LHS. This gives us $|S|\geq k$.  Therefore, it follows that $|S|=k$. Putting $|S|=k$ in \cref{eq2}, we also get $e(S,S)=0$. That is, $S$ is an independent set of $G$. Hence, $(G,k)$ is a YES instance of \textsc{Independent Set}.

This proves \cref{Exact Differ Affine-SAT NP-hard}.
\end{proof}

We now turn to \textsc{Max Differ Affine-SAT}, i.e, finding two solutions that differ on \emph{at least} $d$ variables. We show that this is hard for affine formulas of bounded arity.

\begin{theorem}
\label{Max Differ Affine-SAT NP-hard} \textsc{Max Differ Affine-SAT} is \NPH, even on $4$-affine formulas.
\end{theorem}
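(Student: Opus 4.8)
I would reduce from \textsc{Max Cut}, which is \NPH~\cite{karp2021reducibility}: given a graph $H$ and an integer $c$, decide whether $H$ admits a vertex bipartition with at least $c$ crossing edges. Write $N:=|V(H)|$ and set $M:=N+1$. I build an affine formula $\phi$ with a variable $z_v$ for every $v\in V(H)$ and variables $y_e^{1},\dots,y_e^{M}$ for every $e\in E(H)$, and for every edge $e=uv$ and every $i\in[M]$ I include the equation $z_u\oplus z_v\oplus y_e^{i}=0$. Every equation involves three variables, so $\phi$ is a $4$-affine formula (in fact even $3$-affine), it has polynomial size, and it is satisfiable (the all-$0$ assignment works). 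Finally I set $d:=\lceil N/2\rceil+Mc$.

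The crux is a standard fact: since $\phi$ is satisfiable, the maximum Hamming distance between two satisfying assignments of $\phi$ equals the maximum weight of a solution $\tau$ of the associated homogeneous system $\{z_u\oplus z_v\oplus y_e^i=0\}$. Such a $\tau$ is completely determined by the set $S:=\{v : \tau(z_v)=1\}$, because the equations force $\tau(y_e^i)=1$ precisely when $e$ lies in the cut $\delta_H(S)$; hence the weight of $\tau$ is $|S|+M\cdot|\delta_H(S)|$. So the quantity to analyse is $W^{\star}:=\max_{S\subseteq V(H)}\bigl(|S|+M\cdot|\delta_H(S)|\bigr)$, and I must show $W^{\star}\ge d$ iff $H$ has a cut of size at least $c$. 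In the forward direction, given a cut $\delta_H(S_0)$ of size at least $c$, one of $S_0$ and $V(H)\setminus S_0$ has size at least $\lceil N/2\rceil$ and induces the same cut, so $W^{\star}\ge \lceil N/2\rceil+Mc=d$. In the reverse direction, if some $S$ witnesses $|S|+M|\delta_H(S)|\ge \lceil N/2\rceil+Mc$, then $|S|\le N$ gives $M|\delta_H(S)|\ge Mc-\lfloor N/2\rfloor$, hence $|\delta_H(S)|> c-1$ since $M>\lfloor N/2\rfloor$; as $|\delta_H(S)|$ is an integer, $H$ has a cut of size at least $c$.

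The one genuine subtlety — and the reason the source problem here is \textsc{Max Cut} rather than \textsc{Independent Set} as in \cref{Exact Differ Affine-SAT NP-hard} — is that \textsc{Max Differ} necessarily counts disagreement on \emph{all} variables, including the ``selector'' variables $z_v$, and over an affine solution space one cannot express the complement of a variable, so the spurious additive term $|S|\in[0,N]$ cannot be eliminated. The fix is to inflate the genuine quantity (the cut) by a factor $M>N$, so that $|S|$ is worth less than one crossing edge; this is exactly where the multiplicity $M$ and the ceiling in $d$ come from, and verifying that this scaling ``rounds the noise away'' is the only place that needs care. (Equivalently, the whole problem is computing a maximum-weight codeword of the binary linear code with bounded-width parity checks encoded by $\phi$, and \textsc{Max Cut} supplies hard instances of this via the cut-space encoding; a minimization-flavoured problem like \textsc{Independent Set} resists a direct encoding precisely because affine codes inherently reward large supports.)
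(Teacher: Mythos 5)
Your proof is correct, and it takes a genuinely different route from the paper's. The paper reduces from \textsc{Multicolored Clique} on regular graphs via a heavy gadget: weighted variable copies, a ternary-tree hierarchy of dummy variables that forces exactly one overlap per colour class, replicated auxiliary variables, and arity-four edge equations $x_u\oplus x_v\oplus y_e\oplus z_e=1$, with the whole analysis phrased as a budget $B=n-d$ on overlaps. You instead exploit the linear-algebraic structure head-on: since the pairwise differences of solutions of a satisfiable affine system are exactly the kernel of that system, \textsc{Max Differ Affine-SAT} on satisfiable instances is the maximum-weight-codeword problem, and the cut space of a graph supplies hard instances via \textsc{Max Cut}. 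Your two technical points both check out: the weight of the solution determined by $S$ is indeed $|S|+M\cdot|\delta_H(S)|$, and the threshold $d=\lceil N/2\rceil+Mc$ works in both directions --- passing to the larger of $S_0$ and its complement gives the forward inequality, and $M=N+1>\lfloor N/2\rfloor$ forces $|\delta_H(S)|>c-1$, hence $\ge c$, in the reverse one. What your approach buys is considerable: the argument is a few lines once the kernel observation is made, it isolates the real source of hardness (max-weight codewords of the cut-space code), and it actually yields $3$-affine formulas, strictly strengthening the stated theorem; the replication trick is a clean way to neutralize the unavoidable contribution of the selector variables $z_v$ to the Hamming distance. What the paper's construction buys in exchange is less clear for this particular theorem --- its overlap-budget framework is the one that generalizes to the paper's parameterized lower bounds, but as a proof of plain \NP-hardness your reduction is both simpler and stronger. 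One small presentational caveat: you should state explicitly that \textsc{Max Cut} (decision version, general graphs) is \NPC, and note that the equivalence ``$(\phi,d)$ is YES iff $W^\star\ge d$'' uses that the all-zero assignment is itself a solution, so a kernel element of weight $W^\star$ immediately gives the required pair of satisfying assignments.
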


\begin{proof}
We describe a reduction from \textsc{Multicolored Clique on Regular graphs}. Consider an instance $(G,k)$ of \textsc{Multicolored Clique}, where $G$ is a $r$-regular graph. We assume that each color-class of $G$ has size $N:= 2\cdot 3^q$. It can be argued that a suitably-sized $r$-regular graph exists whose addition to the color-class makes this assumption hold true. We construct an affine formula $\phi$ as follows: For every vertex $v\in V(G)$, introduce a variable $x_v$ and its $\ell$ copies $\big($say $x_v^1, x_v^2, \ldots x_v^{\ell}\big)$, where $\ell:= k\cdot (r-k+1)+k\cdot q$. We force these copies to take the same truth value as $x_v$ via the equations $x_v \oplus x_v^1 = 0, x_v^1\oplus x_v^2 =0, \ldots, x_v^{\ell-1}\oplus x_v^{\ell}=0$. For every edge $e=uv\in E(G)$, we add variables $y_e$ and $z_e$, and also the equation  $x_u\oplus x_v\oplus y_e\oplus z_e = 1$. 

For any $1\leq i\leq k$, consider the $i^{th}$ color-class, say $V_i = \{v_i^1, v_i^2, \ldots, v_i^{N}\}$. First, we add $N/3$ many 
\emph{Stage 1 dummy variables} $\big($say $d_{i,1}^1$, $d_{i,1}^2$,$\ldots$,$d_{i,1}^{N/3}\big)$, group the $x$ variables corresponding to the vertices of $V_i$ into $N/3$ triplets, and add $N/3$ equations that equate the xor of a triplet's variables and a dummy variable to $0$. More precisely, we add the following $N/3$ equations: 
\begin{equation*}
\big(x_{v_i^1} \oplus x_{v_i^2} \oplus x_{v_i^3}\big) \oplus d_{i,1}^1 = 0,~ \big(x_{v_i^4} \oplus x_{v_i^5} \oplus x_{v_i^6}\big) \oplus d_{i,1}^2 = 0, ~\ldots,~ \big(x_{v_i^{N-2}} \oplus x_{v_i^{N-1}} \oplus x_{v_i^{N}}\big) \oplus d_{i,1}^{N/3} = 0
\end{equation*}
Next, we repeat the same process as follows: We introduce $N/3^2$ many \emph{Stage 2 dummy variables} $\big($say $d_{i,2}^1$, $d_{i,2}^2$, $\ldots$, $d_{i,2}^{N/3^2}\big)$, group the $N/3$ many Stage 1 dummy variables into $N/3^2$ triplets, and add $N/3^2$ equations that equate the xor of a triplet's Stage 1 dummy variables and a Stage 2 dummy variable to $0$. More precisely, we add the following $N/3^2$ equations:
\begin{equation*}
\resizebox{1\hsize}{!}{
$\big(d_{i,1}^1 \oplus d_{i,1}^2 \oplus d_{i,1}^3\big) \oplus d_{i,2}^1 = 0,~ \big(d_{i,1}^4 \oplus d_{i,1}^5 \oplus d_{i,1}^6\big) \oplus d_{i,2}^2 = 0, ~\ldots,~ \big(d_{i,1}^{N/3-2} \oplus d_{i,1}^{N/3-1} \oplus d_{i,1}^{N/3}\big) \oplus d_{i,2}^{N/3^2} = 0$}
\end{equation*}
Repeating the same procedure, we add the following $N/3^3$, $N/3^4$, $\ldots$, $N/3^q=2$ equations:
\begin{equation*}
\resizebox{1\hsize}{!}{
$\big(d_{i,2}^1 \oplus d_{i,2}^2 \oplus d_{i,2}^3\big) \oplus d_{i,3}^1 = 0,~ \big(d_{i,2}^4 \oplus d_{i,2}^5 \oplus d_{i,2}^6\big) \oplus d_{i,3}^2 = 0, ~\ldots,~ \big(d_{i,2}^{N/3^2-2} \oplus d_{i,2}^{N/3^2-1} \oplus d_{i,2}^{N/3^2}\big) \oplus d_{i,3}^{N/3^3} = 0$}
\end{equation*}
\begin{equation*}
\resizebox{1\hsize}{!}{
$\big(d_{i,3}^1 \oplus d_{i,3}^2 \oplus d_{i,3}^3\big) \oplus d_{i,4}^1 = 0,~ \big(d_{i,3}^4 \oplus d_{i,3}^5 \oplus d_{i,3}^6\big) \oplus d_{i,4}^2 = 0, ~\ldots,~ \big(d_{i,3}^{N/3^3-2} \oplus d_{i,3}^{N/3^3-1} \oplus d_{i,3}^{N/3^3}\big) \oplus d_{i,4}^{N/3^4} = 0$}
\end{equation*}
$$\vdots$$
\begin{equation*}
\big(d_{i,q-1}^1 \oplus d_{i,q-1}^2 \oplus d_{i,q-1}^3\big) \oplus d_{i,q}^1 = 0,~ \big(d_{i,q-1}^4 \oplus d_{i,q-1}^5 \oplus d_{i,q-1}^6\big) \oplus d_{i,q}^2 = 0
\end{equation*}
Next, we add $B+1$ \emph{auxiliary variables} $\big($say $D_i^1, \ldots, D_i^{B+1}\big)$ and the following equations: $$\big(d_{i,q}^1\oplus d_{i,q}^2\big) \oplus D_i^1 = 0, ~\big(d_{i,q}^1\oplus d_{i,q}^2\big) \oplus D_i^2 = 0,~\ldots,~
\big(d_{i,q}^1\oplus d_{i,q}^2\big) \oplus D_i^{B+1} = 0,$$
where $B:= k\cdot (\ell+1)+k\cdot (r-k+1)+k\cdot q$ is the budget that we set on the total number of overlaps. That is, we set $d = n-B$, where $n$ denotes the number of variables in $\phi$. Now, we prove that $(G,k)$ is a YES instance of \textsc{Multicolored Clique} if and only if $(\phi,d)$ is a YES instance of \textsc{Max Differ Affine-SAT}. We present a proof sketch of this equivalence below.

\textbf{Forward direction}. In the first assignment, we set i) all $x$ and $y$ variables to $0$, ii) all $z$ variables to $1$, and iii) all dummy and auxiliary variables to $0$. In the second assignment, we assign i) $0$ to the $k$ many $x$ variables that correspond to the multi-colored clique's vertices, ii) $1$ to the remaining $x$ variables, iii) $0$ to all $z$ variables, iv) $0$ to the $k\cdot (r-k+1)$ many $y$ variables that correspond to those edges that have one endpoint inside the multi-colored clique and the other endpoint outside it, v) $1$ to the remaining $y$ variables, and vi) $1$ to all auxiliary variables. Also, in the second assignment, for each $1\leq i\leq k$, we assign i) $0$ to that Stage $1$ dummy variable which was grouped with the $x$ variable corresponding to the multi-colored clique's vertex from the $i^{th}$ color-class, $0$ to that Stage $2$ dummy variable which was grouped with this Stage $1$ dummy variable, $0$ to that Stage $3$ dummy variable which was grouped with this Stage $2$ dummy variable, and so on $\ldots$, and ii) $1$ to the remaining dummy variables. It can be verified that these two assignments satisfy $\phi$, and they overlap on $B$ many variables.

\textbf{Reverse direction}. First, we show that each of the $k$ color-classes has at least one vertex on whose corresponding $x$ variable the two assignments overlap. Consider any $1\leq i\leq k$. Since the $B+1$ auxiliary variables are forced to take the same truth value and there are only at most $B$ overlaps, the two assignments must differ on them. This forces the two assignments to overlap on one of the two Stage $q$ dummy variables. Further, this forces at least one overlap amongst the three Stage $q-1$ dummy variables that were grouped with this Stage $q$ dummy variable. This effect propagates to lower-indexed stages, and eventually forces at least one overlap amongst the $x$ variables corresponding to the vertices of the $i^{th}$ color-class.

Next, we show that each of the $k$ color-classes has at most one vertex on whose corresponding $x$ variable the two assignments overlap. Suppose not. Then, there are at least two overlaps amongst the $x$ variables corresponding to the vertices of some color class. Also, based on the previous paragraph, we know that there is at least one overlap amongst the $x$ variables corresponding to the vertices of each of the remaining $k-1$ color classes. Therefore, overall, there are at least $k+1$ many overlaps amongst the $x$ variables. So, the contribution of these $x$ variables and their copies to the total number of overlaps becomes $\geq (k+1)\cdot (\ell+1)=B+1$. However, this exceeds the budget $B$ on the number of overlaps, which is a contradiction.  

Based on the previous two paragraphs, we know that for each $1\leq i\leq k$, there is exactly one overlap amongst the $x$ variables corresponding to the vertices of the $i^{th}$ color class.  Finally, we show that the set, say $S$, formed by these $k$ vertices is the desired multi-colored clique. Suppose not. Then, there are $>k\cdot (r-k+1)$ edges that have one endpoint in $S$ and the other endpoint outside $S$. Also, for each such edge, the two assignments must overlap on one of its corresponding $y$ and $z$ variables. Therefore, we have $>k\cdot (r-k+1)$ overlaps on the $y$ and $z$ variables. Also, $k\cdot q$ overlaps are forced on the dummy variables via the equations added in the grouping procedure. Thus, overall, the total number of overlaps exceeds $B$, which is a contradiction. 

This concludes a proof sketch of \cref{Max Differ Affine-SAT NP-hard}.
\end{proof}

If, on the other hand, all equations in the formula have at most two variables, then both problems turn out to be tractable. We describe this algorithm next.

\begin{theorem}
\label{Exact Differ Affine-SAT poly-time on 2-affine formulas}
\textsc{Exact/Max Differ Affine-SAT} is polynomial-time solvable on $2$-affine formulas.
\end{theorem}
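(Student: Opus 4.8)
The plan is to reduce both variants to a question about the solution space of a homogeneous linear system over $\mathbb{F}_2$, which for $2$-affine formulas has a transparent combinatorial description in terms of connected components.

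First I would dispose of satisfiability: using Gaussian elimination \cite{grcar2011ordinary} one decides in polynomial time whether $\phi$ is satisfiable, answering NO if it is not. Otherwise, fix one satisfying assignment $\sigma_0$ and let $V\subseteq\mathbb{F}_2^n$ be the solution space of the homogenization $\phi_0$ of $\phi$ (replace the right-hand side of every equation by $0$). Every satisfying assignment of $\phi$ has the form $\sigma_0\oplus\tau$ with $\tau\in V$, and conversely $\sigma_0$ and $\sigma_0\oplus\tau$ are two satisfying assignments that differ exactly on the support of $\tau$; hence the set of numbers of variables on which two satisfying assignments of $\phi$ can differ is precisely $\{\,\mathrm{wt}(\tau):\tau\in V\,\}$, where $\mathrm{wt}$ denotes Hamming weight. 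Thus \textsc{Max Differ Affine-SAT} asks whether $\max_{\tau\in V}\mathrm{wt}(\tau)\ge d$, and \textsc{Exact Differ Affine-SAT} asks whether $d\in\{\mathrm{wt}(\tau):\tau\in V\}$.

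Next I would describe $V$ explicitly. Since $\phi$ is $2$-affine, each homogenized equation is trivial, or of the form $x_i=0$ (from a unit equation), or of the form $x_i\oplus x_j=0$, i.e.\ $x_i=x_j$ (from a two-variable equation). Build the graph $H$ on the variables with an edge $\{x_i,x_j\}$ for every two-variable equation, and call a variable \emph{grounded} if it occurs in some unit equation. Then $\tau\in V$ iff $\tau$ is constant on each connected component of $H$ and is $0$ on every component containing a grounded variable. Calling a component \emph{free} if it contains no grounded variable (a variable occurring in no two-variable equation is a singleton component, free unless grounded), and letting $s_1,\dots,s_m$ be the sizes of the free components, it follows that $V$ is, up to the fixed zeros on grounded components, the direct sum over free components of $\{0^{s_j},1^{s_j}\}$, so $\{\mathrm{wt}(\tau):\tau\in V\}$ equals the set of subset sums of the multiset $\{s_1,\dots,s_m\}$.

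With this structure in hand both problems become easy. For \textsc{Max Differ Affine-SAT} the maximum attainable weight is $\sum_{j=1}^m s_j$ (set every free component to all-ones), so one outputs YES iff $\sum_j s_j\ge d$. For \textsc{Exact Differ Affine-SAT} one must decide whether $d$ is a subset sum of $\{s_1,\dots,s_m\}$; since $s_j\le n$ and $m\le n$, these integers are polynomially bounded (in particular may be taken in unary), so this is decidable in polynomial time via the \textsc{Subset Sum} routine recalled in the preliminaries \cite{koiliaris2019faster}. The main point needing care is the structural claim about $V$: one must verify that, once satisfiability of $\phi$ has been settled, the right-hand constants are irrelevant to $V$, and that unit equations and variables appearing in no equation are folded correctly into the component picture; everything after that is a direct computation.
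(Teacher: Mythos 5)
Your proposal is correct, and it reaches the same final combinatorial object as the paper (Subset Sum over the sizes of certain variable-blocks), but by a genuinely different route. The paper argues directly about which assignments satisfy which equations: it first contracts the components of a graph $G_0$ built from the equations $x\oplus y=0$ into single weighted variables, then builds a second graph $G_1$ from the equations $x\oplus y=1$, detects unsatisfiability via a bipartiteness test on $G_1$, and observes that any two satisfying assignments either agree or disagree on each entire component of $G_1$. You instead invoke the affine structure of the solution set: once satisfiability is settled (by Gaussian elimination as a black box), the achievable difference counts are exactly the Hamming weights occurring in the kernel $V$ of the homogenized system, and for a $2$-affine formula $V$ is read off from the components of a single graph $H$ that ignores the right-hand sides, with unit equations grounding some components to zero. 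What your approach buys is generality and cleanliness: the coset observation (achievable distances $=$ weight distribution of the kernel) holds for arbitrary affine formulas and is essentially the engine behind the paper's later \FPT{} algorithm (\cref{Max Differ Affine-SAT FPT in d}); it also lets you treat $x\oplus y=0$ and $x\oplus y=1$ uniformly and makes the handling of unit equations explicit, a case the paper's two-graph presentation glosses over. What the paper's approach buys is a self-contained elementary argument that localizes the source of unsatisfiability (an odd cycle in $G_1$) without appealing to Gaussian elimination. Both then finish identically: compare $\sum_j s_j$ with $d$ for the \textsc{Max} variant, and run unary \textsc{Subset Sum} on the block sizes for the \textsc{Exact} variant.
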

\begin{proof}
Consider an instance $(\phi,d)$ of \textsc{Exact Differ Affine-SAT}, where $\phi$ is a $2$-affine formula. First, we construct a graph $G_{0}$ as follows: Introduce a vertex for every variable of $\phi$. For every equation of the form $x\oplus y=0$ in $\phi$, add the edge $xy$. We compute the connected components of $G_{0}$. Observe that for each component $C$ of $G_{0}$, the equations of $\phi$ corresponding to the edges of $C$ are simultaneously satisfied if and only if all variables of $C$ take the same truth value. So, any pair of satisfying assignments of $\phi$ either overlap on all variables in $C$, or differ on all variables in $C$. Thus, we replace all variables in $C$ by a single variable, and set its weight to be the size of $C$. More precisely, i) we remove all but one variable (say $z$) of $C$ from the variable-set of $\phi$, ii) we remove all those equations from $\phi$ that correspond to the edges of $C$, iii) for every variable $v\in C\setminus \{z\}$, we replace the remaining appearances of $v$ in $\phi$ (i.e., in equations of the form $v\oplus \underline{~~} =1$) with $z$, and iv) we set the weight of $z$ to be the number of variables in $C$. Let $\phi'$ denote the variable-weighted affine formula so obtained. Then, our goal is to decide whether $\phi'$ has a pair of satisfying assignments such that the weights of the variables at which they differ add up to exactly $d$. 

Note that all equations in $\phi'$ are of the form $x\oplus y=1$. Next, we construct a vertex-weighted graph $G_{1}$ as follows: Introduce a vertex for every variable of $\phi'$, and assign it the same weight as that of its corresponding variable. For every equation $x\oplus y=1$ of $\phi'$, add the edge $xy$. We compute the connected components of $G_{1}$. Then, we run a bipartiteness-testing algorithm on each component of $G_{1}$. Suppose that there is a component $C$ of $G_{1}$ that is not bipartite. Then, there is an odd-length cycle in $C$, say with vertices $x_1,x_2, \ldots, x_{2\ell}, x_{2\ell+1}$ (in that order). Note that the edges of this cycle correspond to the equations $x_1 \oplus x_2=1$, $x_2\oplus x_3=1,\ldots, x_{2\ell}\oplus x_{2\ell+1}=1$, $x_{2\ell+1}\oplus x_1=1$ in $\phi'$. Adding (modulo $2$) these $2\ell+1$ equations, we get LHS $= (2\cdot x_1 + 2\cdot x_2 + \ldots + 2\cdot x_{2\ell+1})~mod~2 = 0$, and RHS $= (2\ell+1)~mod~2 = 1$. So, these $2\ell+1$ equations of $\phi'$ cannot be simultaneously satisfied. Thus, we return NO. Now, assume that all components of $G_{1}$ are bipartite. See \cref{fig: graph G1 for 2-affine formulas} for an example.

\begin{figure}
\begin{center}
\includegraphics[scale=0.7]{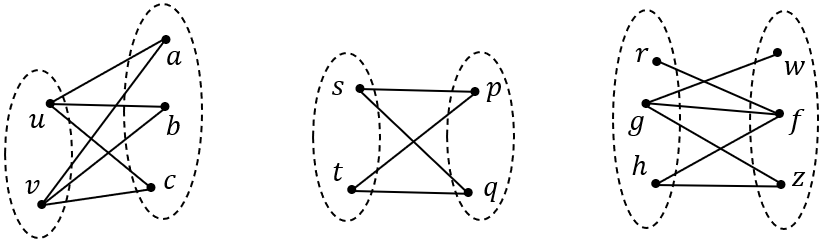}
\end{center}
\caption{This figure shows the bipartite components of the graph $G_1$ constructed in the proof of \cref{Exact Differ Affine-SAT poly-time on 2-affine formulas}, when the $2$-affine formula $\phi'$ consists of the following equations: $u\oplus a=1$, $u\oplus b=1$, $u\oplus c=1$, $v\oplus a=1$, $v\oplus b=1$, $v\oplus c=1$, $s\oplus p=1$, $s\oplus q=1$, $t\oplus p=1$, $t\oplus q=1$, $r\oplus f=1$, $g\oplus w=1$, $g\oplus f=1$, $g\oplus z=1$, $h\oplus f=1$, $h\oplus z=1$.} 
\label{fig: graph G1 for 2-affine formulas}
\end{figure}

Let $C_1,\ldots, C_{p}$ denote the connected components of $G_1$. Consider any $1\leq i\leq p$. Let $A$ and $B$ denote the parts of the bipartite component $C_i$. Observe that the equations of $\phi'$ corresponding to the edges of $C_i$ are simultaneously satisfied if and only if either i) all variables in $A$ are set to $1$, and all variables in $B$ are set to $0$, or ii) all variables in $A$ are set to $0$, and all variables in $B$ are set to $1$. So, any pair of satisfying assignments of $\phi'$ either overlap on all variables in $C_i$, or differ on all variables in $C_i$. Thus, our problem amounts to deciding whether there is a subset of components of $G_1$ whose collective weight is exactly $d$. That is, our goal is to decide whether there exists $X\subseteq [p]$ such that $\sum_{i\in X} weight(C_i) = d$, where $weight(C_i)$ denotes the sum of the weights of the variables in $C_i$. To do so, we use the algorithm for \textsc{Subset Sum problem} with $\big\{weight(C_1),\ldots, weight(C_p)\big\}$ as the multi-set of integers and $d$ as the target sum. 

The algorithm described here works almost as it is for \textsc{Max Differ Affine-SAT} too. In the last step, instead of reducing to \textsc{Subset Sum problem}, we simply check whether the collective weight of all components of $G_1$ is at least $d$. That is, if $\sum_{i=1}^p weight(C_i)\geq d$, we return YES; otherwise, we return NO. Thus, both \textsc{Exact Differ Affine-SAT} and \textsc{Max Differ Affine-SAT} are polynomial-time solvable on $2$-affine formulas. This proves \cref{Exact Differ Affine-SAT poly-time on 2-affine formulas}.
\end{proof}


We now turn to the parameterized complexity of \textsc{Exact Differ Affine-SAT} and \textsc{Max Differ Affine-SAT} when parameterized by the number of variables that differ in the two solutions. It turns out that the exact version of the problem is \WOH{}, while the maximization question is \FPT{}. We first show the hardness of \textsc{Exact Differ Affine-SAT} by a reduction from \textsc{Exact Even Set}.

\begin{theorem}
\label{Exact Differ Affine-SAT W[1]-hard in d} \textsc{Exact Differ Affine-SAT} is \WOH in the parameter $d$.
\end{theorem}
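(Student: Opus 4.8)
The plan is to give a parameterized reduction from \textsc{Exact Even Set}, which is \WOH in the parameter $k$. Given an instance $(\mathcal{U}, \mathcal{F}, k)$, I introduce one Boolean variable $x_u$ for every element $u \in \mathcal{U}$, and for every set $S \in \mathcal{F}$ I add the homogeneous equation $\bigoplus_{u \in S} x_u = 0$. Let $\phi$ be the resulting affine formula and set $d := k$. (We may assume $k \le |\mathcal{U}|$, as otherwise the input is trivially a NO instance.) This is a polynomial-time construction and, since $d = k$, the parameter is preserved; so once the equivalence below is established, \WO-hardness in $d$ follows.

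The correctness argument rests on one observation: if $\sigma_1,\sigma_2$ are any two satisfying assignments of $\phi$, their coordinatewise difference $\tau := \sigma_1 \oplus \sigma_2$ is again a solution of the same homogeneous system (the solution set is an $\mathbb{F}_2$-affine subspace, here a linear subspace since it contains the all-zero vector), and $\sigma_1,\sigma_2$ differ on exactly the variable set $\{x_u : \tau(x_u)=1\}$. For the forward direction, suppose $(\phi,d)$ is a YES instance, witnessed by $\sigma_1,\sigma_2$ differing on exactly $d$ variables; put $X := \{u \in \mathcal{U} : \tau(x_u) = 1\}$. Then $|X| = d = k$, and for every $S \in \mathcal{F}$ we have $|X \cap S| \equiv \bigoplus_{u \in S}\tau(x_u) = 0 \pmod 2$, so $X$ has even intersection with every set of the family and is a valid \textsc{Exact Even Set} solution. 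Conversely, given $X \subseteq \mathcal{U}$ with $|X| = k$ and $|X\cap S|$ even for all $S$, take $\sigma_1$ to be the all-zero assignment (which satisfies every homogeneous equation) and $\sigma_2$ the indicator vector of $X$ (which satisfies $\bigoplus_{u \in S} x_u = 0$ precisely because $|X \cap S|$ is even); these two satisfying assignments differ on exactly the $k$ variables $\{x_u : u \in X\}$, so $(\phi, d)$ is a YES instance.

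I do not expect a genuine obstacle here; the construction is a direct dictionary translation (elements $\leftrightarrow$ variables, sets $\leftrightarrow$ equations). The only point that requires care is conceptual rather than technical: it is the \emph{difference} $\sigma_1 \oplus \sigma_2$ of the two assignments, not either assignment on its own, that plays the role of the set $X$, which is exactly why the homogeneous encoding works and why we need nothing about the solution set of $\phi$ beyond its being a linear subspace containing $\mathbf{0}$.
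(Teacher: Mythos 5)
Your reduction is exactly the one the paper uses (variables for elements, the homogeneous equation $\bigoplus_{u\in S}x_u=0$ for each set, $d:=k$), and your correctness argument via the difference $\sigma_1\oplus\sigma_2$ being itself a solution of the homogeneous system is just a linear-algebraic rephrasing of the paper's parity-splitting argument. The proposal is correct and takes essentially the same approach.
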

\begin{proof}
We describe a reduction from \textsc{Exact Even Set}. Consider an instance $(\mathcal{U},\mathcal{F},k)$ of \textsc{Exact Even Set}. We construct an affine formula $\phi$ as follows: For every element $u$ in the universe $\mathcal{U}$, introduce a variable $x_u$. For every set $S$ in the family $\mathcal{F}$, introduce the equation $\underset{u\in S}{\Oplus}x_u = 0$. We set $d=k$. We prove that $(\mathcal{U},\mathcal{F},k)$ is a YES instance of \textsc{Exact Even Set} if and only if $(\phi, d)$ is a YES instance of \textsc{Exact Differ Affine-SAT}. At a high level, we argue this equivalence as follows: In the forward direction, we show that the two desired satisfying assignments are i) the all $0$ assignment, and ii) the assignment that assigns $1$ to the variables that correspond to the elements of the given even set, and assigns $0$ to the remaining variables. In the reverse direction, we show that the desired even set consists of those elements of the universe that correspond to the variables on which the two given satisfying assignments differ. We make this argument precise below.

\textbf{Forward direction}. Suppose that $(\mathcal{U},\mathcal{F},k)$ is a YES instance of \textsc{Exact Even Set}. That is, there is a set $X\subseteq \mathcal{U}$ of size exactly $k$ such that $|X\cap S|$ is even for all sets $S$ in the family $\mathcal{F}$. Let $\sigma_1$ and $\sigma_2$ be assignments of $\phi$ defined as follows: For every $u\in X$, $\sigma_1$ sets $x_u$ to $0$, and $\sigma_2$ sets $x_u$ to $1$. For every $u\in \mathcal{U}\setminus X$, both $\sigma_1$ and $\sigma_2$ set $x_u$ to $0$. Note that $\sigma_1$ and $\sigma_2$ differ on exactly $|X|=k$ variables. Consider any set $S$ in the family $\mathcal{F}$. The  equation corresponding to $S$ in the formula $\phi$ is $\underset{u\in S}{\Oplus}x_u=0$. All variables in the left-hand side are set to $0$ by $\sigma_1$. Also, the number of variables in the left-hand side that are set to $1$ by $\sigma_2$ is $|X\cap S|$, which is an even number. Therefore, the left-hand side evaluates to $0$ under both $\sigma_1$ and $\sigma_2$. So, $\sigma_1$ and $\sigma_2$ are satisfying assignments of $\phi$. Hence, $(\phi,k)$ is a YES instance of \textsc{Exact Differ Affine-SAT}. 

\textbf{Reverse direction}. Suppose that $(\phi,k)$ is a YES instance of \textsc{Exact Differ Affine-SAT}. That is, there are satisfying assignments $\sigma_1$ and $\sigma_2$ of $\phi$ that differ on exactly $k$ variables. Let $X$ denote the $k$-sized set $\big\{u\in \mathcal{U}~|~\sigma_1 \mbox{ and } \sigma_2 \mbox{ differ on } x_u\big\}$. Consider any set $S$ in the family $\mathcal{F}$. The equation corresponding to $S$ in the formula $\phi$ is $\underset{u\in S}{\Oplus}x_u=0$. We split the left-hand side into two parts to express this equation as $\underbracket[0.4pt]{\underset{u\in S\setminus X}{\Oplus}x_u}_{\clap{\scriptsize{A}}}\oplus \underbracket[0.4pt]{\underset{u\in X\cap S}{\Oplus}x_u}_{\clap{\scriptsize{B}}}=0$. Note that $\sigma_1$ and $\sigma_2$ overlap on all variables in the first part, i.e., $A$. So, $A$ evaluates to the same truth value under both assignments. Thus, as both $\sigma_1$ and $\sigma_2$ satisfy this equation, they must assign the same truth value to the second part, i.e., $B$, as well. Also, $\sigma_1$ and $\sigma_2$ differ on all variables in $B$. So, for its truth value to be same under both assignments, $B$ must have an even number of variables. That is, $|X\cap S|$ must be even. Hence, $(\mathcal{U}, \mathcal{F}, k)$ is a YES instance of \textsc{Exact Even Set}. 

This proves \cref{Exact Differ Affine-SAT W[1]-hard in d}.
\end{proof}

We now turn to the FPT algorithm for \textsc{Max Differ Affine-SAT}, which is based on obtaining solutions using Gaussian elimination and working with the free variables: if the set of free variables $F$ is ``large'', we can simply set them differently and force the dependent variables, and guarantee ourselves a distinction on at least $|F|$ variables. Note that this is the step that would not work as-is for the exact version of the problem. If the number of free variables is bounded, we can proceed by guessing the subset of free variables on which the two assignments differ. We make these ideas precise below. 

\begin{theorem}
\label{Max Differ Affine-SAT FPT in d} \textsc{Max Differ Affine-SAT} admits an algorithm with running time $\mathcal{O}^{\star}(2^d)$.
\end{theorem}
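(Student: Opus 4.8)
The plan is to use Gaussian elimination to obtain a description of the solution space of $\phi$ and then split on the number of free variables. In polynomial time we either detect that $\phi$ is unsatisfiable (return NO, unless $d=0$), or we get a partition of the variable set into a set $F$ of \emph{free} variables and a set of \emph{forced} variables, each forced variable $w$ being expressed as $w = c_w \oplus \bigoplus_{v \in S_w} v$ for some $S_w \subseteq F$ and constant $c_w \in \mathbb{F}_2$. The key structural observation is: every satisfying assignment of $\phi$ is uniquely determined by its restriction to $F$, and if $\sigma_1,\sigma_2$ are satisfying assignments that differ on exactly the set $T \subseteq F$ of free variables, then they differ on a forced variable $w$ if and only if $|S_w \cap T|$ is odd. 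Hence the number of variables on which $\sigma_1$ and $\sigma_2$ differ equals $\delta(T) := |T| + \big|\{\, w \text{ forced} : |S_w \cap T| \text{ is odd} \,\}\big|$, which depends only on $T$.

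Given this, the algorithm is short. If $|F| \ge d$, return YES. Otherwise $|F| < d$, so $2^{|F|} \le 2^{d-1}$; iterate over all subsets $T \subseteq F$, compute $\delta(T)$ in polynomial time for each, and return YES iff $\max_{T} \delta(T) \ge d$. For correctness of the first branch, let $\sigma_1$ be the solution with all free variables set to $0$ and $\sigma_2$ the solution with exactly $d$ of the free variables set to $1$ (and forced variables determined accordingly); this is a pair of satisfying assignments differing on at least those $d$ free variables, so YES is correct. For the second branch, by the structural observation the maximum Hamming distance between two satisfying assignments is exactly $\max_{T \subseteq F} \delta(T)$, which the brute force computes exactly; the case $d=0$ is covered by $T=\emptyset$. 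For the running time: Gaussian elimination and each evaluation of $\delta$ are polynomial, the enumeration has at most $2^{d-1}$ iterations, and the first branch is polynomial, giving $\mathcal{O}^{\star}(2^d)$ overall.

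The only real content, and the step I expect to be the crux, is the structural observation that the set of forced variables on which two solutions differ is completely determined — through the parities $|S_w \cap T|$ — by the set $T$ of free variables on which they differ. Once this is in place, both the trivial-YES branch (large $F$) and the brute-force branch (small $F$) are immediate. This is also exactly where the argument breaks for \textsc{Exact Differ Affine-SAT}: there one needs $\delta(T) = d$ on the nose, and a large free-variable set no longer settles the instance for free, which is consistent with that version being \WOH{} in $d$.
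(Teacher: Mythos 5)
Your proof is correct and follows essentially the same route as the paper's: Gaussian elimination, the key observation that the forced variables on which two solutions differ are determined by the parities $|S_w \cap T|$, an immediate YES when $|F|\ge d$, and brute force over the at most $2^{d-1}$ subsets of $F$ otherwise. The only slip is the parenthetical ``return NO, unless $d=0$'' for unsatisfiable $\phi$: an unsatisfiable formula has no pair of satisfying assignments at all, so the answer is NO even when $d=0$ (the paper returns NO unconditionally in that case).
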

\begin{proof}
Consider an instance $(\phi,d)$ of \textsc{Max Differ Affine-SAT}. We use Gaussian elimination to find the solution set of $\phi$ in polynomial-time. If $\phi$ has no solution, we return NO. If $\phi$ has a unique solution and $d=0$, we return YES. If $\phi$ has a unique solution and $d\geq 1$, we return NO. Now, assume that $\phi$ has multiple solutions. 

Let $F$ denote the set of all free variables. Suppose that $|F|\geq d$. Let $\sigma_1$ denote the solution of $\phi$ obtained by setting all free variables to $0$, and then setting the forced variables to take values as per their dependence on the free variables. Similarly, let $\sigma_2$ denote the solution of $\phi$ obtained by setting all free variables to $1$, and then setting the forced variables to take values as per their dependence on the free variables. Note that $\sigma_1$ and $\sigma_2$ differ on all free variables (and possibly some forced variables too). So, overall, they differ on at least $|F|\geq d$ variables. Thus, we return YES. Now, assume that $|F|\leq d-1$. We guess the subset $D\subseteq F$ of free variables on which two desired solutions (say $\sigma_1$ and $\sigma_2$) differ. Note that there are $2^{|F|}\leq 2^{d-1}$ such guesses.

First, consider any forced variable $x$ that depends on an odd number of free variables from $D$. That is, the expression for its value is the XOR of an odd number of free variables from $D$ (possibly along with the constant $1$ and/or some free variables from $F\setminus D$). Then, note that this expression takes different truth values under $\sigma_1$ and $\sigma_2$. That is, $\sigma_1$ and $\sigma_2$ differ on $x$. 

Next, consider any forced variable $x$ that depends on an even number of free variables from $D$. That is, the expression for its value is the XOR of an even number of free variables from $D$ (possibly along with the constant $1$ and/or some free variables from $F\setminus D$). Then, note that this expression takes the same truth value under $\sigma_1$ and $\sigma_2$. That is, $\sigma_1$ and $\sigma_2$ overlap on $x$. 

Thus, overall, these two solutions differ on i) all free variables from $D$, and ii) all those forced variables that depend upon an odd number of free variables from $D$. If the total count of such variables is $\geq d$ for some guess $D$, we return YES. Otherwise, we return NO. This proves \cref{Max Differ Affine-SAT FPT in d}.
\end{proof}
Now, we show that~\textsc{Max Differ Affine-SAT} has a polynomial kernel in the parameter $d$.
\begin{theorem}
\label{Max Differ Affine-SAT poly-kernel in d} \textsc{Max Differ Affine-SAT} admits a kernel with $\mathcal{O}(d^2)$ variables and $\mathcal{O}(d^2)$ equations. \end{theorem}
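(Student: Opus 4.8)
The plan is to build on the argument in the proof of \cref{Max Differ Affine-SAT FPT in d}. Given an instance $(\phi,d)$, first run Gaussian elimination. If $\phi$ has no solution, output a trivial NO-instance of constant size; if $\phi$ has a unique solution, output a trivial YES-instance when $d=0$ and a trivial NO-instance when $d\geq 1$. Otherwise $\phi$ has multiple solutions: fix the free/forced partition produced by Gaussian elimination, let $F$ be the set of free variables, and for each forced variable $x$ write its value as $x = \big(\Oplus_{u\in T_x} u\big)\oplus c_x$ with $T_x\subseteq F$ and $c_x\in\mathbb{F}_2$. Exactly as in \cref{Max Differ Affine-SAT FPT in d}, if $|F|\geq d$ we output a trivial YES-instance. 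So from now on assume $|F|\leq d-1$.

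Next I would apply two reduction rules. First, delete every forced variable $x$ with $T_x=\emptyset$: such a variable takes the same truth value in every solution, so deleting it cannot change the maximum Hamming distance between two solutions. Second, recall from the proof of \cref{Max Differ Affine-SAT FPT in d} that a guess $D\subseteq F$ yields two solutions that differ precisely on $D$ together with all forced variables $x$ for which $|T_x\cap D|$ is odd; taking $D=\{u\}$ for a single free variable $u$ thus produces two solutions differing on $u$ and on every forced $x$ with $u\in T_x$. Hence, if some free variable $u$ lies in $T_x$ for at least $d-1$ forced variables $x$, the two solutions differ on at least $(d-1)+1=d$ variables, so we may output a trivial YES-instance.

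After exhaustively applying these rules, every surviving forced variable $x$ satisfies $|T_x|\geq 1$, and every free variable lies in at most $d-2$ of the sets $T_x$. Counting incidences, $\sum_{x\ \text{forced}}|T_x| = \sum_{u\in F}|\{x : u\in T_x\}| \leq |F|\cdot(d-2) \leq (d-1)(d-2)$, and since each surviving forced variable contributes at least $1$ to the left-hand side, the number of forced variables is at most $(d-1)(d-2)=\mathcal{O}(d^2)$. We then output $(\phi',d)$, where $\phi'$ is the affine formula on the variable set $F\cup\{\text{forced variables}\}$ consisting of exactly the equations $x\oplus\big(\Oplus_{u\in T_x}u\big)=c_x$, one per surviving forced variable. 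The solution set of $\phi'$ equals that of $\phi$ restricted to the surviving variables, so the answer is preserved; $\phi'$ has at most $(d-1)+(d-1)(d-2)=\mathcal{O}(d^2)$ variables and at most $(d-1)(d-2)=\mathcal{O}(d^2)$ equations. Everything runs in polynomial time, since Gaussian elimination does.

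I expect the crux to be the incidence-counting argument combined with the single-element guess $D=\{u\}$: once one observes that a free variable occurring in many of the linear forms $T_x$ forces a large-distance pair of solutions, the rest is routine bookkeeping. A secondary point that needs care is verifying that replacing $\phi$ by the system $\{x\oplus(\Oplus_{u\in T_x}u)=c_x\}$ and discarding the constant-valued forced variables genuinely preserves the set of satisfying assignments, and hence the maximum pairwise Hamming distance and the correctness of the instance.
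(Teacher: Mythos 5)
Your proposal is correct and follows essentially the same route as the paper's proof: the same trivial-case handling, the same rule returning YES when a single free variable occurs in the linear forms of at least $d-1$ forced variables (the paper realizes this via the two explicit solutions rather than the guess $D=\{u\}$, but it is the same construction), the same removal of constant forced variables, and the same incidence count giving at most $(d-1)(d-2)$ forced variables and hence an $\mathcal{O}(d^2)$ bound on variables and equations. No gaps.
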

\begin{proof}
Consider an instance $(\phi,d)$ of \textsc{Max Differ Affine-SAT}. We use Gaussian elimination to find the solution set of $\phi$ in polynomial-time. Then, as in the proof of \cref{Max Differ Affine-SAT FPT in d}, i) we return NO if $\phi$ has no solution, or if $\phi$ has a unique solution and $d\geq 1$, ii) we return YES if $\phi$ has a unique solution and $d=0$, or if $\phi$ has multiple solutions with at least $d$ free variables. Now, assume that $\phi$ has multiple solutions with at most $d-1$ free variables.

Note that the system of linear equations formed by the expressions for the values of forced variables is an affine formula (say $\phi'$) that is equivalent to $\phi$. That is, $\phi'$ and $\phi$ have the same solution sets. So, we work with the instance $(\phi',d)$ in the remaining proof. 

Suppose that there is a free variable, say $x$, such that at least $d-1$ forced variables depend on $x$. That is, there are at least $d-1$ forced variables such that the expressions for their values are the XOR of $x$ (possibly along with the constant $1$ and/or some other free variables). Let $\sigma_1$ denote the solution of $\phi'$ obtained by setting all free variables to $0$, and then setting the forced variables to take values as per their dependence on the free variables. Let $\sigma_2$ denote the solution of $\phi'$ obtained by setting $x$ to $1$ and the remaining free variables to $0$, and then setting the forced variables to take values as per their dependence on the free variables. Note that $\sigma_1$ and $\sigma_2$ differ on $x$, and also on each of the $\geq d-1$ forced variables that depend on $x$. So, overall, $\sigma_1$ and $\sigma_2$ differ on at least $d$ variables. Thus, we return YES. 

Now, assume that for every free variable $x$, there are at most $d-2$ forced variables that depend on $x$. So, as there are at most $d-1$ free variables, it follows that there are at most $(d-1)\cdot (d-2)$ forced variables that depend on at least one free variable. The remaining forced variables are the ones that do not depend on any free variable. That is, any such forced variable $y$ is set to a constant (i.e., $0$ or $1$) as per the expression for its value. We remove the variable $y$ and its corresponding equation (i.e., $y=0$ or $y=1$) from $\phi'$, and we leave $d$ unchanged. This is safe because $y$ takes the same truth value under all solutions of $\phi'$.

Note that the affine formula so obtained has at most $d-1$ free variables and at most $(d-1)\cdot (d-2)$ forced variables. So, overall, it has at most $(d-1)^2$ variables. Also, it has at most $(d-1)\cdot (d-2)$ equations. This proves \cref{Max Differ Affine-SAT poly-kernel in d}.
\end{proof}


We finally turn to the ``dual'' parameter, $n-d$: the number of variables on which the two assignments sought \emph{overlap}. We show that both the exact and maximization variants for affine formulas are \WOH{} in this parameter by reductions from \textsc{Exact Odd Set} and \textsc{Odd Set}, respectively.

\begin{theorem}
\label{Exact Differ Affine-SAT W[1]-hard in n-d} \textsc{Exact/Max Differ Affine-SAT} is \WOH in the parameter $n-d$.
\end{theorem}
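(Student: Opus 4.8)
The plan is to reduce \textsc{Exact Odd Set} to \textsc{Exact Differ Affine-SAT} and \textsc{Odd Set} to \textsc{Max Differ Affine-SAT}, in both cases by essentially the same construction as in \cref{Exact Differ Affine-SAT W[1]-hard in d} but adapted so that the relevant solutions correspond to \emph{odd} intersections and so that the overlap count $n-d$ of the produced instance is exactly the source parameter $k$. Given $(\mathcal{U},\mathcal{F},k)$, I would introduce a variable $x_u$ for every $u\in\mathcal{U}$, one extra ``parity'' variable $z$, and a block of clones $z^1,\dots,z^{k+1}$ forced equal to $z$ via the chain $z\oplus z^1=0,\ z^1\oplus z^2=0,\ \dots,\ z^k\oplus z^{k+1}=0$. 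For each $S\in\mathcal{F}$ of odd size I would add the equation $\bigoplus_{u\in S}x_u=0$, and for each $S$ of even size the equation $\bigoplus_{u\in S}x_u\oplus z=0$. Finally I would set $d=n-k$, where $n$ is the number of variables of the resulting affine formula $\phi$, so that $n-d=k$.

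The argument then proceeds through three observations. First, the all-zero assignment $\sigma_0$ satisfies $\phi$ (all right-hand sides are $0$), and for any two satisfying assignments $\sigma_1,\sigma_2$ their coordinatewise XOR $\tau:=\sigma_1\oplus\sigma_2$ again satisfies every equation, with $\tau(v)=1$ precisely on the variables where $\sigma_1,\sigma_2$ differ; hence the number of variables on which they overlap equals the number of zeros of $\tau$, and it is enough to reason about nonzero solutions $\tau$ of the homogeneous system. Second, the clone chain forces $\tau$ to be constant on $\{z,z^1,\dots,z^{k+1}\}$; if that constant were $0$, then $\tau$ would have at least $k+2$ zeros, overshooting the overlap budget $k$, so any $\tau$ we care about has $\tau(z)=1$. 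Third, letting $Z=\{u\in\mathcal{U}:\tau(x_u)=0\}$ (which, since $\tau(z)=1$, is exactly the zero set of $\tau$, so $|Z|$ is the overlap count): an odd-size equation $\bigoplus_{u\in S}x_u=0$ says $|S\setminus Z|$ is even, hence $|S\cap Z|\equiv|S|\equiv1\pmod2$, while an even-size equation $\bigoplus_{u\in S}x_u\oplus z=0$ (with $\tau(z)=1$) says $|S\setminus Z|$ is odd, hence $|S\cap Z|\equiv|S|+1\equiv1\pmod2$. So $Z$ meets every set of $\mathcal{F}$ in an odd number of elements, i.e.\ $Z$ is an odd set; conversely, from any odd set $X$ I can define the nonzero solution $\tau(z)=\tau(z^i)=1$, $\tau(x_u)=1\iff u\notin X$, whose zero set is $X$, giving the pair $(\sigma_0,\sigma_0\oplus\tau)$ which overlaps on exactly $|X|$ variables. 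Matching ``size exactly $k$'' with ``overlap exactly $n-d$'' gives the reduction from \textsc{Exact Odd Set}, and ``size at most $k$'' with ``overlap at most $n-d$'' gives the reduction from \textsc{Odd Set}; since $n-d=k$ and both source problems are \WOH in $k$, both \textsc{Exact Differ Affine-SAT} and \textsc{Max Differ Affine-SAT} are \WOH in the parameter $n-d$.

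The hard part is dealing with sets of even size in $\mathcal{F}$: the bare homogeneous equation $\bigoplus_{u\in S}x_u=0$ only controls $|S\cap Z|$ modulo $|S|$, so even-size sets would force an \emph{even} (not odd) intersection, which is the wrong side of the parity. The parity variable $z$, placed into exactly the even-size equations, is what repairs this, and the clone block is what guarantees $z$ lands on the ``difference'' side so that it is never charged against the small overlap budget. The remaining care needed is purely in checking that these two gadgets interact correctly in both directions and that $k+1$ clones suffice for the budget argument in both the exact and the at-most variant --- the rest is the same bookkeeping already carried out in \cref{Exact Differ Affine-SAT W[1]-hard in d}.
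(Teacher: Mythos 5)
Your reduction is correct and follows essentially the same route as the paper's: both reduce from \textsc{Exact Odd Set} (resp.\ \textsc{Odd Set}), attach parity-correcting auxiliary variables to the even-sized sets of $\mathcal{F}$, and use a chain of clones to force those auxiliaries onto the ``differ'' side against the overlap budget $k$. Your version is slightly leaner --- a single global parity variable $z$ shared by all even-sized equations, and the homogeneous difference-vector viewpoint in place of the paper's direct bookkeeping with two assignments and per-set variables $y_S, z_S^1,\dots,z_S^k$ --- but the construction and argument are the same in substance.
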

\begin{proof}
We describe a reduction from \textsc{Exact Odd Set}. Consider an instance $(\mathcal{U}, \mathcal{F}, k)$ of \textsc{Exact Odd Set}. We construct an affine formula $\phi$ as follows:  For every element $u$ in the universe $\mathcal{U}$, introduce a variable $x_u$. For every odd-sized set $S$ in the family $\mathcal{F}$, introduce the equation $\underset{u\in S}{\Oplus}x_u=1$. For every even-sized set $S$ in the family $\mathcal{F}$, introduce $k+1$ variables $y_S, z_S^{1}, z_{S}^2,\ldots, z_{S}^k$, and the equations $y_S\oplus z_S^{1}=0, y_S\oplus z_S^{2}=0, \ldots, {y_S\oplus z_S^{k}=0}$ and ${\underset{u\in S}{\Oplus}x_u \oplus y_S=0}$. The number of variables in $\phi$ is $n= |\mathcal{U}|+{(k+1)}\cdot$${\big|\big\{S\in \mathcal{F}~\big|~|S| \mbox{ is even}\big\}\big|}$. We set $d=n-k$. 

We prove that $(\mathcal{U}, \mathcal{F}, k)$ is  a YES instance of \textsc{Exact Odd Set} if and only if $(\phi, d)$ is a YES instance of \textsc{Exact Differ Affine}-SAT. At a high level, we argue this equivalence as follows: In the forward direction, we show that the two desired satisfying assignments are i) the assignment that sets all $y$ and $z$ variables to $0$ and all $x$ variables to $1$, and ii) the assignment that sets all $y$ and $z$ variables to $1$, assigns $1$ to all those $x$ variables that correspond to the elements of the given odd set, and assigns $0$ to the remaining $x$ variables. In the reverse direction, we show that the two assignments must differ on all $y$ and $z$ variables (and so, all $k$ overlaps are restricted to occur at $x$ variables), and the desired odd set consists of those elements of the universe that correspond to the $x$ variables on which the two assignments overlap. We make this argument precise below.

\textbf{Forward direction}. Suppose that $(\mathcal{U}, \mathcal{F}, k)$ is a YES instance of \textsc{Exact Differ Affine}-SAT. That is, there is a set $X\subseteq \mathcal{U}$ of size exactly $k$ such that $|X\cap S|$ is odd for all sets $S$ in the family $\mathcal{F}$. Let $\sigma_1$ and $\sigma_2$ be assignments of $\phi$ defined as follows: For every even-sized set $S$ in the family $\mathcal{F}$, $\sigma_1$ sets $y_S, z_S^1, z_S^2,\ldots, z_S^k$ to $0$, and $\sigma_2$ sets $y_S, z_S^1, z_S^2,\ldots, z_S^k$ to $1$. For every $u\in X$, both $\sigma_1$ and $\sigma_2$ set $x_u$ to $1$. For every $u\in \mathcal{U}\setminus X$, $\sigma_1$ sets $x_u$ to 1, and $\sigma_2$ sets $x_u$ to $0$. Note that $\sigma_1$ and $\sigma_2$ overlap on exactly $|X|=k$ variables (and so, they differ on exactly $n-k$ variables). Now, we show that $\sigma_1$ and $\sigma_2$ are satisfying assignments of $\phi$. 

First, we argue that $\sigma_1$ and $\sigma_2$ satisfy the equations of $\phi$ that were added corresponding to odd-sized sets of the family $\mathcal{F}$. Consider any odd-sized set $S$ in the family $\mathcal{F}$. The equation corresponding to $S$ in the formula $\phi$ is $\underset{u\in S}{\Oplus}x_u = 1$. The number of variables in the left-hand side that are set to $1$ by $\sigma_2$ is $|X\cap S|$, which is an odd number. Also, all $|S|$ (again, which is an odd number) variables in the left-hand side are set to $1$ by $\sigma_1$. Therefore, the left-hand side evaluates to $1$ under both $\sigma_1$ and $\sigma_2$. So, both these assignments satisfy the equation $\underset{u\in S}{\Oplus}x_u=1$.

Next, we argue that $\sigma_1$ and $\sigma_2$ satisfy the equations of $\phi$ that were added corresponding to even-sized sets of the family $\mathcal{F}$. Consider any even-sized set $S$ in the family $\mathcal{F}$. The $k+1$ equations corresponding to $S$ in the formula $\phi$ are $y_S\oplus z_S^{1}=0, y_S\oplus z_S^{2}=0, \ldots, y_S\oplus z_S^{k}=0$ and $\underset{u\in S}{\Oplus}x_u\oplus y_S=0$. Consider any of the first $k$ equations, say $y_S\oplus z_S^{i} = 0$, where $1\leq i\leq k$. Both variables on the left-hand side, i.e., $y_S$ and $z_S^{i}$, are assigned the same truth value, i.e., both $0$ by $\sigma_1$ and both $1$ by $\sigma_2$. So, both these assignments satisfy the equation $y_S\oplus z_S^{i}=0$. Next, consider the last equation, i.e., $\underset{u\in S}{\Oplus}x_u \oplus y_S=0$. The number of variables amongst $x_u\bigl\lvert_{u\in S}$ that are set to $1$ by $\sigma_2$ is $|X\cap S|$, which is an odd number. Also, the variable $y_S$ is set to $1$ by $\sigma_2$. Therefore, overall, the number of variables in the left-hand side that are set to $1$ by $\sigma_2$ is even. Also, $\sigma_1$ sets all variables on the left-hand side to $1$ except $y_S$. That is, it sets all the $|S|$ (again, which is an even number) variables $x_u\big\lvert_{u\in S}$ to $1$. Therefore, the left-hand side evaluates to $0$ under both $\sigma_1$ and $\sigma_2$. So, both these assignments satisfy the equation $\underset{u\in S}{\Oplus}x_u\oplus y_S=0$. 

Hence, $(\phi, n-k)$ is a YES instance of \textsc{Exact Differ Affine}-SAT. 

\textbf{Reverse direction}. Suppose that $(\phi, n-k)$ is a YES instance of \textsc{Exact Differ Affine}-SAT. That is, there are satisfying assignments $\sigma_1$ and $\sigma_2$ of $\phi$ that overlap on exactly $k$ variables.  Consider any even-sized set $S$ in the family $\mathcal{F}$. As $\sigma_1$  satisfies the equations $y_S\oplus z_S^{1}=0, y_S\oplus z_S^{2}=0, \ldots, y_S\oplus z_S^{k}=0$,  it must assign the same truth value to all the $k+1$ variables $y_S, z_S^{1}, z_S^{2}, \ldots, z_S^{k}$. Similarly, $\sigma_2$ must assign the same truth value to $y_S, z_S^{1}, z_S^{2}, \ldots, z_S^{k}$. Therefore, either $\sigma_1$ and $\sigma_2$ overlap on all these $k+1$ variables, or they differ on all these $k+1$ variables. So, as there are only $k$ overlaps, $\sigma_1$ and $\sigma_2$ must differ on $y_S, z_S^{1}, z_{S}^2,\ldots, z_{S}^k$. Thus, all the $k$ overlaps occur at $x$ variables. Let $X$ denote the $k$-sized set ${\big\{u\in \mathcal{U}~|~\sigma_1 \mbox{ and } \sigma_2 \mbox{ differ on } x_u\big\}}$. Now, we show that $|X\cap S|$ is odd for all sets $S$ in the family $\mathcal{F}$. 

First, we argue that $X$ has odd-sized intersection with all odd-sized sets of the family $\mathcal{F}$. Consider any odd-sized set $S$ in the family $\mathcal{F}$. The equation corresponding to $S$ in the formula $\phi$ is $\underset{u\in S}{\Oplus}x_u=1$. We split the left-hand side into two parts to express this equation as $\underbracket[0.4pt]{\underset{u\in X\cap S}{\Oplus}x_u}_{\clap{\scriptsize{A}}}\oplus \underbracket[0.4pt]{\underset{u\in S\setminus X}{\Oplus}x_u}_{\clap{\scriptsize{B}}} = 1$. Note that $\sigma_1$ and $\sigma_2$ overlap on all variables in the first part, i.e., $A$. So, $A$ evaluates to the same truth value under both assignments. Thus, as both $\sigma_1$ and $\sigma_2$ satisfy this equation, they must assign the same truth value to the second part, i.e., $B$, as well. Also, $\sigma_1$ and $\sigma_2$ differ on all variables in $B$. So, for its truth value to be same under both assignments, $B$ must have an even number of variables. That is, $|S\setminus X|$ must be even. Now, as $|S|$ is odd and $|S\setminus X|$ is even, we infer that $|X\cap S|=|S|-|S\setminus X|$ is odd. 

Next, we argue that $X$ has odd-sized intersection with all even-sized sets of the family $\mathcal{F}$. Consider any even-sized set $S$ in the family $\mathcal{F}$. Amongst the $k+1$ equations corresponding to $S$ in the formula $\phi$, consider the last equation, i.e., $\underset{u\in S}{\Oplus}x_u \oplus y_S = 0$. We split the left-hand side into two parts to express this equation as  $\underbracket[0.4pt]{\underset{u\in X\cap S}{\Oplus}x_u}_{\clap{\scriptsize{A}}}\oplus \underbracket[0.4pt]{\underset{u\in S\setminus X}{\Oplus}x_u\oplus y_S}_{\clap{\scriptsize{B}}}=0$. Note that $\sigma_1$ and $\sigma_2$ overlap on all variables in the first part, i.e., $A$. So, $A$ evaluates to the same truth value under both assignments. Thus, as both $\sigma_1$ and $\sigma_2$ satisfy this equation, they must assign the same truth value to the second part, i.e., $B$. Also, $\sigma_1$ and $\sigma_2$ differ on all variables in $B$. So, for its truth value to be same under both assignments, $B$ must have an even number of variables. That is, $|S\setminus X|+1$ must be even. Now, as $|S|$ is even and $|S\setminus X|$ is odd, we infer that $|X\cap S| = |S| - |S\setminus X|$ is odd.

Hence, $(\mathcal{U}, \mathcal{F}, k)$ is a YES instance of \textsc{Exact Odd Set}.

This reduction also works with \textsc{Odd Set} as the source problem and \textsc{Max Differ Affine}-SAT as the target problem. So, both \textsc{Exact Differ Affine-SAT} and \textsc{Max Differ Affine-SAT} are \WOH in the parameter $n-d$. This proves \cref{Exact Differ Affine-SAT W[1]-hard in n-d}.
\end{proof}


\section{2-CNF formulas}
\label{sec: 2-CNF formulas}
In this section, we explore the classical and parameterized complexity of \textsc{Max Differ 2-SAT} and \textsc{Exact Differ 2-SAT}. We first show that these problems are polynomial time solvable on $(2,2)$-CNF formulas by constructing a graph corresponding to the instance and observing some structural properties of that graph. Then we show that both of these problems are \WOH with respect to the parameter $d$. We begin by proving the following theorem. 

\begin{theorem} \label{Max Differ 2-SAT poly-time (2 2)-CNF formulas}
\textsc{Max Differ 2-SAT} is polynomial-time solvable on $(2,2)$-CNF formulas.
\end{theorem}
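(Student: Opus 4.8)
The plan is to transform $(\phi,d)$ into an optimisation problem over an auxiliary graph of maximum degree two, and then solve that by dynamic programming. First clean up $\phi$: if it contains the empty clause, output NO; delete tautological clauses; collapse $(x \vee x)$ into the unit clause $(x)$; and merge clauses using the same variable set. None of this changes the set of satisfying assignments, and every variable still occurs in at most two clauses. Now, for each variable $x$ introduce a \emph{super-variable} $X$ with domain $\{00, 01, 10, 11\}$, the intended meaning of the value $ab$ being ``$\sigma_1(x) = a$ and $\sigma_2(x) = b$'' for the pair of satisfying assignments $(\sigma_1,\sigma_2)$ we are looking for. Each clause $C$ becomes a constraint on the super-variables of the variables of $C$: the allowed value-tuples are exactly those under which \emph{both} $\sigma_1$ and $\sigma_2$ satisfy $C$ (a unit clause thereby just fixes one coordinate of one super-variable). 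Build the graph $G$ whose vertices are the super-variables, with an edge $XY$ whenever $x$ and $y$ co-occur in some clause. Since each variable of $\phi$ lies in at most two clauses and each clause has at most two variables, every super-variable has at most two neighbours, so each connected component of $G$ is a path, a cycle, or an isolated vertex — this is the promised structural property, and it genuinely uses \emph{both} halves of the $(2,2)$ assumption. Moreover every clause is local to a single component, so $(\sigma_1,\sigma_2)$ is a pair of satisfying assignments of $\phi$ if and only if its restriction to each component respects all constraints inside that component, and the Hamming distance between $\sigma_1$ and $\sigma_2$ equals the sum, over components, of the number of super-variables assigned a value in $\{01, 10\}$; call such a super-variable \emph{flipped}.

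It therefore suffices to compute, for each component $H$ of $G$, the maximum number $m(H)$ of flipped super-variables over all locally consistent assignments, with $m(H) := -\infty$ if there is no locally consistent assignment, and then output YES if and only if $\sum_H m(H) \ge d$. For an isolated vertex $m(H)$ is read off directly ($1$ if the variable has no unit clause on it, $0$ if it has a single satisfiable unit clause, $-\infty$ if it has conflicting unit clauses). For a path $v_1, \dots, v_t$ we run the standard left-to-right dynamic program: for each value of $v_i$ keep the maximum number of flipped vertices among $v_1, \dots, v_i$ subject to every edge-constraint and unary constraint among them; this is correct because the restriction of $G$ to $\{v_1, \dots, v_i\}$ is exactly the processed prefix. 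For a cycle we additionally guess the value of one fixed vertex (four choices) and run the path program on the remaining arc. Each component is handled in time linear in its size with a small constant, and $G$ is built in time linear in the size of $\phi$, so the whole algorithm runs in polynomial — indeed linear — time.

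The work here is organisational rather than deep, so I do not expect a genuine obstacle; the points that need care are: verifying the maximum-degree-two claim precisely (a variable in two clauses of size at most two has at most one co-occurrence in each clause, hence at most two neighbours); checking that all degenerate inputs are absorbed — a variable in no clause is an isolated, always-flippable vertex, and an unsatisfiable $\phi$ forces $m(H) = -\infty$ for some component so the algorithm correctly outputs NO, which also covers $d = 0$ (where only a single satisfying assignment is needed); and proving the routine optimality of the path and cycle dynamic programs, together with the legitimacy of summing $m(H)$ over components (valid because distinct components share neither a variable nor a clause). The one subtlety worth flagging is that, unlike the affine case treated in \cref{Exact Differ Affine-SAT poly-time on 2-affine formulas}, handling \textsc{Exact Differ 2-SAT} would additionally require the \emph{set} of attainable flip-counts per component plus a subset-sum-style combination step; for \textsc{Max Differ 2-SAT} the single number $m(H)$ per component suffices.
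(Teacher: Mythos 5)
Your proof is correct, but it takes a genuinely different route from the paper's. The paper builds a graph on $2n$ literal-vertices $x[0],x[1]$ with clause-edges and matching-edges, classifies each component as path-like, even-cycle-like, or odd-cycle-like, explicitly constructs a pair of assignments differing on all variables of a path-like or even-cycle-like component (and all but one variable of an odd-cycle-like one), and proves via a disjoint-vertex-covers/bipartiteness argument that one overlap per odd-cycle-like component is unavoidable; this yields the closed-form answer $n$ minus the number of odd-cycle-like components. You instead reduce to a binary CSP over super-variables with domain $\{00,01,10,11\}$, observe that the $(2,2)$ restriction forces the constraint graph to have maximum degree two (hence components that are paths, cycles, or isolated vertices), and compute the maximum number of ``flipped'' super-variables per component by a standard dynamic program, guessing one vertex's value to break cycles. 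Both arguments hinge on the same structural fact and induce the same partition of the variables, but yours trades the paper's delicate hand construction and parity argument for a mechanical DP: you lose the explicit combinatorial characterization of the optimum, and gain robustness --- your DP needs no case analysis, handles unit clauses and degenerate inputs uniformly, and, as you note, extends to \textsc{Exact Differ 2-SAT} by recording the set of attainable flip-counts per component and combining them with a subset-sum step, which is essentially what the paper does separately in \cref{Exact Differ 2-SAT poly-time on (2 2)-CNF formulas} via further case analysis of the components.
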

\begin{proof}
Consider an instance $(\phi,d)$ of \textsc{Max Differ 2-SAT}, where $\phi$ is a $(2,2)$-CNF formula on $n$ variables. We construct a graph $G$ as follows: For every variable $x$ of $\phi$, introduce vertices $x[0]$ and $x[1]$. We add an edge corresponding to each clause of $\phi$ as follows: For every clause of the form $x\vee y$, add the edge $\{x[1],y[1]\}$. For every clause of the form $\neg x\vee \neg y$, add the edge $\{x[0],y[0]\}$. For every clause of the form $x\vee \neg y$, add the edge $\{x[1],y[0]\}$. We refer to such edges as \emph{clause-edges}. Also, for every variable $x$ of $\phi$, add the edge $\{x[0], x[1]\}$. We refer to such edges as \emph{matching-edges}. See \cref{fig:graph for Max Differ 2-SAT} for an example. 

The structure of $G$ can be understood as follows: In the absence of matching-edges, the graph is a disjoint union of some paths and cycles. Some of these paths/cycles get glued together upon the addition of matching-edges. Observe that a matching-edge can either join a terminal vertex of one path with a terminal vertex of another path, or join any vertex of a path/cycle to an isolated vertex. These are the only two possibilities because the endpoints of any matching-edge together can have only at most two neighbors (apart from each other). Thus, once the matching-edges are added, observe that each component of $G$ becomes a path or a cycle, possibly with pendant matching-edges attached to some of its vertices. We call them \emph{path-like} and \emph{cycle-like} components respectively. We further classify a cycle-like component based on the parity of the length of its corresponding cycle: \emph{odd-cycle-like} and \emph{even-cycle-like}. For each of these three types of components, we analyze the maximum number of its variables on which any two assignments can differ such that both of them satisfy all clauses corresponding to its clause-edges. Consider any component, say $C$, of $G$.

\begin{figure}
\begin{center}
\includegraphics[scale=0.7]{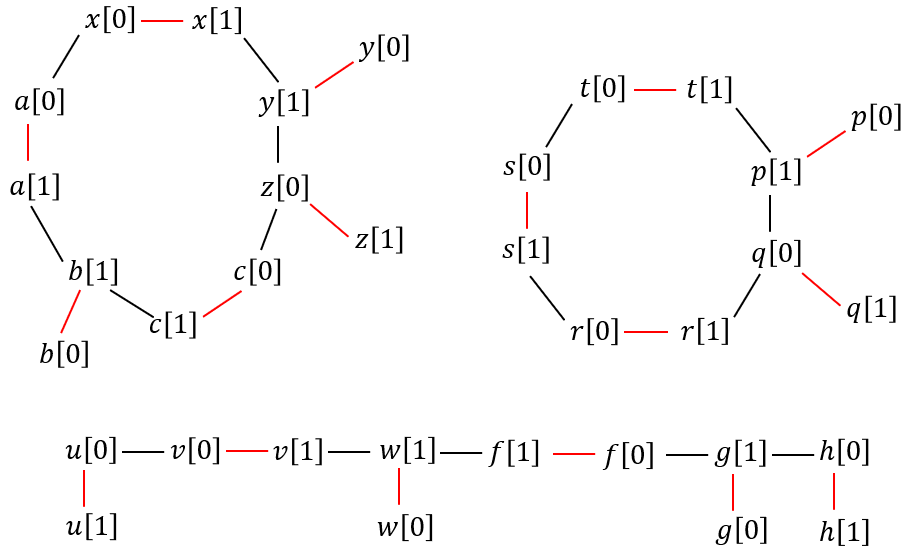}
\end{center}
\caption{This figure shows the graph $G$ constructed in the proof of \cref{Max Differ 2-SAT poly-time (2 2)-CNF formulas} \color{black} when the input $(2,2)$-CNF formula $\phi$ is $(x\vee y)\wedge (y\vee \neg z)\wedge (\neg z\vee \neg c)\wedge (c\vee b) \wedge (b\vee a)\wedge (\neg a\vee \neg x) \wedge (t \vee p) \wedge$ $(p \vee \neg q) \wedge (\neg q \vee r) \wedge (\neg r\vee s) \wedge (\neg s\vee \neg t) \wedge (\neg u \vee \neg v) \wedge (v \vee w) \wedge (w \vee f) \wedge (\neg f \vee g) \wedge (g \vee \neg h)$. The red line segments denote the matching-edges, and the black line segments denote the clause-edges. Note that $G$ has one odd-cycle-like component, one even-cycle-like component and one path-like component.}
\label{fig:graph for Max Differ 2-SAT}
\end{figure}
\textbf{Case 1}. $C$ is a path-like component.

Starting from one of the two terminal vertices of the corresponding path, i) let $x_1[i_1]$, $x_2[i_2]$, $x_3[i_3]$ $\ldots$ denote the odd-positioned vertices, and ii) let $y_1[j_1]$, $y_2[j_2]$, $y_3[j_3]$, $\ldots$ denote the even-positioned vertices. That is, the path has these vertices in the following order: $x_1[i_1]$, $y_1[j_1]$, $x_2[i_2]$, $y_2[j_2]$, $x_3[i_3]$, $y_3[j_3]$, $\ldots$ and so on. 

Let $\sigma_1$ denote the following assignment: i) $\sigma_1$ sets $x_1$ to $i_1$, $x_2$ to $i_2$, $x_3$ to $i_3$, $\ldots$ and so on. ii) For those $y_{k}[j_k]$'s $\big($amongst $y_1[j_1]$, $y_2[j_2]$, $y_3[j_3]$, $\ldots\big)$ whose neighbor along the matching-edge $\big($i.e., $y_k[1-j_{k}]\big)$ is not one of its two neighbors along the path $\big($i.e., $x_{k}[i_k]$ and $x_{k+1}[i_{k+1}]\big)$, and rather hangs as a pendant vertex attached to $y_k[j_k]$: $\sigma_1$ sets $y_k$ to $1-j_k$. Similarly, let $\sigma_2$ denote the following assignment: i) $\sigma_2$ sets $y_1$ to $j_1$, $y_2$ to $j_2$, $y_3$ to $j_3$, $\ldots$ and so on. ii) For those $x_k[i_k]$'s $\big($amongst $x_1[i_1]$, $x_2[i_2]$, $x_3[i_3]$, $\ldots\big)$ whose neighbor along the matching-edge $\big($i.e., $x_k[1-i_k]\big)$ is not one of its two neighbors along the path $\big($i.e., $y_{k-1}[j_{k-1}]$ and $y_{k}[j_k]\big)$, and rather hangs as a pendant vertex attached to $x_k[i_k]$: $\sigma_2$ sets $x_k$ to $1-i_k$. 

Observe that any clause-edge of $C$ is of the form $\big\{x_{k}[i_{k}], y_{k}[j_{k}]\big\}$ or $\big\{y_{k}[j_{k}], x_{k+1}[i_{k+1}]\big\}$. In either case, both $\sigma_1$ and $\sigma_2$ satisfy the corresponding clause. This is because i) in the first case, $\sigma_1$ sets $x_{k}$ to $i_{k}$ and $\sigma_2$ sets $y_{k}$ to $j_{k}$, and ii) in the second case, $\sigma_1$ sets $x_{k+1}$ to $i_{k+1}$ and $\sigma_2$ sets $y_{k}$ to $j_{k}$. Thus, overall, both $\sigma_1$ and $\sigma_2$ satisfy all clauses corresponding to the clause-edges of $C$. Also, $\sigma_1$ and $\sigma_2$ differ on all variables that appear in $C$ $\big($i.e., $x_1, x_2, x_3, \ldots$ and $y_1, y_2, y_3, \ldots\big)$. 

As an example, for the path-like component shown in \cref{fig:graph for Max Differ 2-SAT}, starting from the terminal vertex $u[0]$, i) $\sigma_1$ sets $u$ to $0$, $v$ to $1$, $w$ to $0$, $f$ to $1$, $g$ to $1$, $h$ to $1$, and ii) $\sigma_2$ sets $u$ to $1$, $v$ to $0$, $w$ to $1$, $f$ to $0$, $g$ to $0$, $h$ to $0$. 

\textbf{Case 2}. $C$ is an even-cycle-like component.

Starting from an arbitrary vertex of the corresponding cycle, and moving along the cycle in one direction (say, clockwise): i) let $x_1[i_1]$, $x_2[i_2]$, $\ldots$, $x_{\ell}[i_{\ell}]$ denote the odd-positioned vertices, and ii) let $y_1[j_1]$, $y_2[j_2]$, $\ldots$, $y_{\ell}[j_{\ell}]$ denote the even-positioned vertices. That is, the cycle has these vertices in the following order: $x_1[i_1]$, $y_1[j_1]$, $x_2[i_2]$, $y_2[j_2]$, $\ldots$, $x_{\ell}[i_{\ell}]$, $y_{\ell}[j_{\ell}]$. 

Let $\sigma_1$ denote the following assignment: i) $\sigma_1$ sets $x_1$ to $i_1$, $x_2$ to $i_2$, $\ldots$, $x_{\ell}$ to $i_{\ell}$. ii) For those $y_k[j_k]$'s $\big($amongst $y_1[j_1]$, $y_2[j_2]$, $\ldots$, $y_{\ell}[j_{\ell}]\big)$ whose neighbor along the matching-edge $\big($i.e., $y_k[1-j_k]\big)$ is not one of its two neighbors along the cycle $\big($i.e.,$x_{k}\big[i_{k}\big]$ and $x_{k+1}[i_{k+1}]$\footnote{When $k=\ell$, take $k+1$ as $1$.}$\big)$, and rather hangs as a pendant vertex attached to $y_k[j_k]$: $\sigma_1$ sets $y_k$ to $1-j_k$. Similarly, let $\sigma_2$ denote the following assignment: i) $\sigma_2$ sets $y_1$ to $j_1$, $y_2$ to $j_2$, $\ldots$, $y_{\ell}$ to $j_{\ell}$. ii) For those $x_k[i_k]$'s $\big($amongst $x_1[i_1]$, $x_2[i_2]$, $\ldots$, $x_{\ell}[i_{\ell}]\big)$ whose neighbor along the matching-edge $\big($i.e., $x_k[1-i_k]\big)$ is not one of its two neighbors along the cycle $\big($i.e., $y_{k-1}[j_{k-1}]$\footnote{When $k=1$, take $k-1$ as $\ell$.} and $y_{k}[j_k]\big)$, and rather hangs as a pendant vertex attached to $x_k[i_k]$: $\sigma_2$ sets $x_k$ to $1-i_k$. Using the same argument as in Case 1, both $\sigma_1$ and $\sigma_2$ satisfy all clauses corresponding to the clause-edges of $C$, and they differ on all variables that appear in $C$ $\big($i.e., $x_1$, $x_2$, $\ldots$, $x_{\ell}$ and $y_1$, $y_2$, $\ldots$, $y_{\ell}\big)$. 

As an example, for the even-cycle-like component shown in \cref{fig:graph for Max Differ 2-SAT}, starting from the vertex $q[0]$, i) $\sigma_1$ sets $q$ to $0$, $r$ to $0$, $s$ to $0$, $t$ to $1$, $p$ to $0$, and ii) $\sigma_2$ sets $r$ to $1$, $s$ to $1$, $t$ to $0$, $p$ to $1$, $q$ to $1$.

\textbf{Case 3}. $C$ is an odd-cycle-like component.

If every vertex $v[i]$ of the corresponding cycle is such that its neighbor along the matching-edge $\big($i.e., $v[1-i]\big)$ also belongs to the cycle, then the cycle would have even-length. So, as the cycle has odd-length, it must have at least one vertex $v[i]$ such that $v[1-i]$ is not one of its two neighbors along the cycle, and rather hangs as a pendant vertex attached to $v[i]$. Then, we construct assignments $\sigma_1$ and $\sigma_2$ as follows: Both $\sigma_1$ and $\sigma_2$ set $v$ to $i$, thereby satisfying the two clauses that correspond to the two clause-edges that are incident to $v[i]$ along the cycle. Then, since $C\setminus \big\{v[i], v[1-i]\big\}$ is a path-like component, we extend the assignments $\sigma_1$ and $\sigma_2$ to the remaining variables of $C$ $\big($i.e., other than $v\big)$ in the manner described in Case 1. This ensures that both $\sigma_1$ and $\sigma_2$ satisfy all clauses corresponding to the clause-edges of $C$. Also, they differ on all but one variable $\big($i.e., all except $v\big)$ of $C$. 

As an example, for the odd-cycle-like component shown in \cref{fig:graph for Max Differ 2-SAT}, after setting $y$ to $1$ in both $\sigma_1$ and $\sigma_2 $, we remove $y[0]$ and $y[1]$ from the component (see \cref{fig: odd-cycle-like comp to path-like comp}). Then, starting from the terminal vertex $z[0]$ of the path corresponding to the path-like component so obtained, i) $\sigma_1$ sets $z$ to $0$, $c$ to $1$, $b$ to $0$, $a$ to $1$, $x$ to $0$, and ii) $\sigma_2$ sets $z$ to $1$, $c$ to $0$, $b$ to $1$, $a$ to $0$, $x$ to $1$.

\begin{figure}
\includegraphics[scale=0.7]{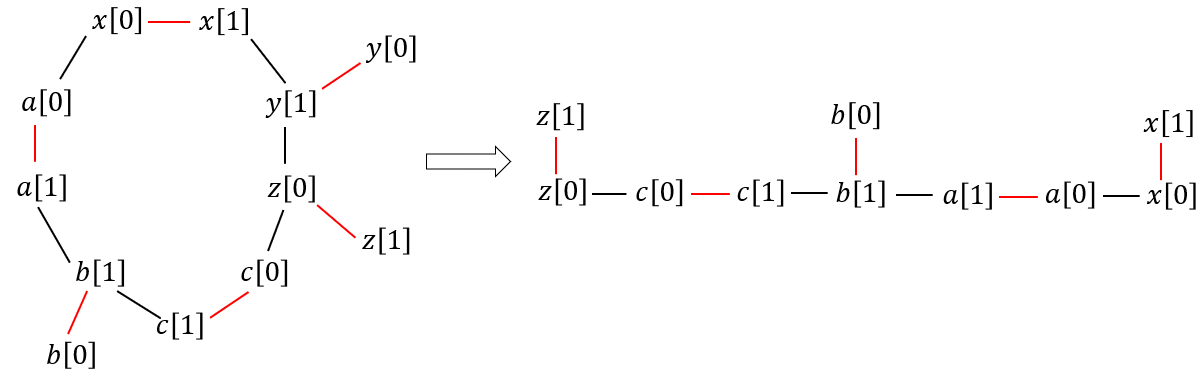}
\caption{This figure is with reference to Case 3 in the proof of \cref{Max Differ 2-SAT poly-time (2 2)-CNF formulas}. Note that the removal of  $y[0]$ and $y[1]$ from the odd-cycle-like component shown in \cref{fig:graph for Max Differ 2-SAT} results in a path-like component.}
\label{fig: odd-cycle-like comp to path-like comp}
\end{figure}
Next, we show that at least one overlap is unavoidable no matter how the variables of $C$ are assigned truth values by a pair of satisfying assignments. For the sake of contradiction, assume that there are assignments $\sigma_1$ and $\sigma_2$ that satisfy all clauses corresponding to the clause-edges of $C$, and also differ on all variables of $C$. Then, construct vertex-subsets $S_1$ and $S_2$ of $C$ as follows: For every variable $x$ that appears in $C$, i) if $\sigma_1$ sets $x$ to $0$ $\big($and so, $\sigma_2$ sets $x$ to $1\big)$, then add $x[0]$ to $S_1$ and $x[1]$ to $S_2$, and ii) if $\sigma_1$ sets $x$ to $1$ $\big($and so, $\sigma_2$ sets $x$ to $0\big)$, then add $x[1]$ to $S_1$ and $x[0]$ to $S_2$. Observe that $S_1$ and $S_2$ are disjoint vertex covers of $C$. So, $C$ must be a bipartite graph. However, this is not possible as $C$ contains an odd-length cycle. 

Thus, overall, based on Cases 1, 2, 3, we express the maximum number of variables on which any two satisfying assignments of $\phi$ can differ as follows:
\begin{equation*}
\underset{\substack{C:~C \text{ \footnotesize{is a path-like}}\\ \text{~~~~~~~~\footnotesize{component of }} G}}{\sum}|vars(C)| ~~~+ \underset{\substack{C:~C \text{ \footnotesize{is an even-cycle-like}}\\ \text{\footnotesize{component of }} G}}{\sum}|vars(C)| ~~~+ \underset{\substack{C:~C \text{ \footnotesize{is an odd-cycle-like}}\\ \text{\footnotesize{component of }} G}}{\sum}\big(|vars(C)|-1\big)
\end{equation*}
\begin{equation*}
= n - \big|\big\{C~|~C \text{ is an odd-cycle-like component of } G\big\}\big|,
\end{equation*}
where $vars(C)$ denotes the set of variables that appear in the component $C$. So, if the number of odd-cycle-like components in $G$ is at most $n-d$, we return YES; otherwise, we return NO. This proves \cref{Max Differ 2-SAT poly-time (2 2)-CNF formulas}.
\end{proof}

We use similar ideas in the proof of \cref{Max Differ 2-SAT poly-time (2 2)-CNF formulas} to show that \textsc{Exact Differ 2-SAT} can also be solved in polynomial time on $(2,2)$-CNF formulas. This requires more careful analysis of the graph constructed and a reduction to \textsc{Subset Sum problem}, as we want the individual contributions, in terms of number of variables where the assignments differ, to sum up to an exact value. We show the result in the following theorem. 

\begin{theorem}
\label{Exact Differ 2-SAT poly-time on (2 2)-CNF formulas} \textsc{Exact Differ 2-SAT} is polynomial-time solvable on $(2,2)$-CNF formulas.
\end{theorem}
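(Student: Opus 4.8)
The plan is to reuse the graph $G$ from the proof of \cref{Max Differ 2-SAT poly-time (2 2)-CNF formulas}: one vertex $x[0],x[1]$ per variable, a clause-edge per clause, a matching-edge $\{x[0],x[1]\}$ per variable, so that (as shown there) every component of $G$ is path-like, even-cycle-like, or odd-cycle-like, and the variables of distinct components are disjoint. For a component $C$, let $\Delta(C)\subseteq\{0,1,\dots,|vars(C)|\}$ be the set of integers $j$ for which some pair of assignments of $vars(C)$ satisfies all clauses contributing clause-edges to $C$ and differs on exactly $j$ variables of $C$. Since an assignment satisfies $\phi$ iff its restriction to each component's variables is ``locally'' satisfying, and such restrictions can be chosen independently across components, $(\phi,d)$ is a YES-instance iff $d$ lies in the sumset $\Delta(C_1)+\cdots+\Delta(C_r)$ over the components $C_1,\dots,C_r$ of $G$. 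So the problem reduces to (i) computing every $\Delta(C)$ and (ii) testing membership of $d$ in the sumset.

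For (i) I would sharpen the case analysis of \cref{Max Differ 2-SAT poly-time (2 2)-CNF formulas}. First, for a path-like component, the set of assignments satisfying its clause-edges is connected under single-variable flips: one argues by induction along the path, using that a two-literal clause, read as a relation on a consecutive pair of variables, forbids exactly one combination and hence admits a value of the new variable that is compatible with both values of its neighbour. Combined with the pair differing on all $|vars(C)|$ variables produced in that proof, a flip-path between such a pair and a constant assignment hits every intermediate distance, so $\Delta(C)=\{0,1,\dots,|vars(C)|\}$. For an odd-cycle-like component, setting the ``hinge'' variable $v$ (whose matching-edge is pendant) to a common value satisfies its two cycle clause-edges and leaves a path-like component on the remaining $|vars(C)|-1$ variables, giving $\Delta(C)\supseteq\{0,\dots,|vars(C)|-1\}$; the value $|vars(C)|$ is unattainable by the bipartiteness argument of \cref{Max Differ 2-SAT poly-time (2 2)-CNF formulas}, so $\Delta(C)=\{0,\dots,|vars(C)|-1\}$. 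The delicate case is an even-cycle-like component: here $|vars(C)|$ is always attainable, yet, for instance, the two-variable component forcing $v_1\Leftrightarrow v_2$ has only its two antipodal solutions, so $\Delta(C)=\{0,|vars(C)|\}$ has a gap. I would prove the dichotomy that $\Delta(C)=\{0,\dots,|vars(C)|\}$ if $C$ has more than two satisfying assignments and $\Delta(C)=\{0,|vars(C)|\}$ otherwise --- the latter alternative being detectable in polynomial time, e.g.\ by testing whether any single variable can be flipped in some solution --- with the non-obvious direction, that three or more solutions already force the full interval, obtained by cutting the cycle at one clause-edge into a path and tracking how deleting one $(v_{\text{first}},v_{\text{last}})$-block of its (flip-connected) solution set affects connectivity.

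For (ii), path-like components, odd-cycle-like components, and ``flexible'' even-cycle-like components contribute full intervals $\{0,\dots,|vars(C)|\}$ or $\{0,\dots,|vars(C)|-1\}$, whose sumset is again a full interval $\{0,\dots,F\}$; each of the remaining (``rigid'') even-cycle-like components $C$ contributes $\{0,|vars(C)|\}$. Hence $d$ lies in the total sumset iff there is a subset $T$ of the rigid components with $\sum_{C\in T}|vars(C)|\in\{\max(0,d-F),\dots,d\}$, which is a \textsc{Subset Sum problem} instance with sizes given in unary and so is solvable in polynomial time by the subroutine already used in \cref{Exact Differ Affine-SAT poly-time on 2-affine formulas}. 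This yields \cref{Exact Differ 2-SAT poly-time on (2 2)-CNF formulas}. The main obstacle is precisely the even-cycle-like case of (i): unlike the maximization version, the achievable number of differing variables inside a component need not form a contiguous interval, and one must show that the only possible gap is the ``all-or-nothing'' one.
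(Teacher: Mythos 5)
Your proposal is correct and follows the same overall strategy as the paper's proof: reuse the graph $G$ from \cref{Max Differ 2-SAT poly-time (2 2)-CNF formulas}, compute the set of achievable difference-counts per component, note that every component type except the pendant-free even cycles contributes a full initial interval, and dispose of the remaining ``rigid'' even cycles via \textsc{Subset Sum problem} over a range of targets (the paper loops $d'$ from $0$ to $D$ and calls the unary \textsc{Subset Sum} subroutine with target $d-d'$, which is the same computation as your window $\{\max(0,d-F),\dots,d\}$). The differences are in how the per-component claims are proved. For path-like components the paper does not go through flip-connectivity of the solution set: it walks along the path until $\mu$ variables have been ``seen'', deletes the next clause-edge to split $C$ into $C_1$ and $C_2$, makes the two assignments differ on all of $C_1$ and coincide on all of $C_2$, and observes that the deleted clause is still satisfied because its $C_2$-side endpoint is set to its satisfying value. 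Your connectivity route also works, but the induction needs the lifting step spelled out (when a flip of the leaf's neighbour makes the current leaf value incompatible, one must first flip the leaf, which is legal because the pre-flip neighbour value is not the forbidden one). More importantly, the case you single out as the main obstacle is less delicate than you fear: the paper splits even-cycle-like components by whether the cycle carries a pendant matching-edge. Without pendants the component is an even cycle alternating clause- and matching-edges, and a direct propagation argument shows it has exactly two antipodal solutions, giving $\{0,|vars(C)|\}$; with a pendant $v[1-i]$, fixing $v$ to $i$ in both assignments satisfies both incident cycle clauses and leaves a path-like remainder, giving the full interval. Your solution-count dichotomy is equivalent to this structural one, but proving ``three or more solutions force the full interval'' by cutting the cycle and tracking connectivity of the solution graph is precisely the work that the pendant observation lets you avoid.
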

\begin{proof}
Consider an instance $(\phi,d)$ of \textsc{Exact Differ 2-SAT}, where $\phi$ is a $(2,2)$-CNF formula. We re-consider the graph $G$ constructed in the proof of \cref{Max Differ 2-SAT poly-time (2 2)-CNF formulas}. Recall that $G$ has three types of components, namely path-like components, even-cycle-like components and odd-cycle-like components. We further classify the even-cycle-like components into two categories: ones with pendants, and ones without pendants. That is, the former category consists of those even-cycle-like components whose corresponding cycles contain at least one vertex that has a pendant edge attached to it, while the latter category consists of all even-cycle components. For each of the four types of components, we analyze the possible values for the number of its variables on which any two assignments can differ such that both of them satisfy all clauses corresponding to its clause-edges. Consider any component, say $C$, of $G$. We use $vars(C)$ to denote the set of variables that appear in $C$.

\textbf{Case 1}. $C$ is a path-like component.

Consider any $0\leq \mu\leq |vars(C)|$. We move along the corresponding path, starting from one of its two terminal vertices. At any point, we say that a variable $v$ has been \emph{seen} if either i) both $v[0]$ and $v[1]$ are vertices of the sub-path traversed so far, or ii) one of $v[0]$ and $v[1]$ is a vertex of the sub-path traversed so far, and the other one hangs as a pendant vertex attached to it. We stop moving along the path once the first $\mu$ variables have been seen. Let $x[i]$ denote the path's vertex at which we stop, and let $y[j]$ denote its immediate next neighbor along the path. Observe that the edge joining $x[i]$ and $y[j]$ along the path is a clause-edge, whose removal breaks $C$ into two smaller path-like components, say $C_1$ $\big($containing $x\big)$ and $C_2$ $\big($containing $y\big)$. 

We construct assignments $\sigma_1$ and $\sigma_2$ as follows: For the path-like component $C_1$, we re-apply the analysis of Case 1 in the proof of \cref{Max Differ 2-SAT poly-time (2 2)-CNF formulas}. That is, as described therein, $\sigma_1$ and $\sigma_2$ are made to set the variables of $C_1$ such that both of them satisfy all clauses corresponding to the clause-edges of $C_1$, and they differ on all its $|vars(C_1)|$ many variables. Then, for the path-like component $C_2$, both $\sigma_1$ and $\sigma_2$ are made to mimic one of the two assignments $\big($namely, the one that sets $y$ to $j\big)$ constructed in Case 1 in the proof of \cref{Max Differ 2-SAT poly-time (2 2)-CNF formulas}. This ensures that both $\sigma_1$ and $\sigma_2$ satisfy all clauses corresponding to the clause-edges of $C_2$, and they overlap on all its $|vars(C_2)|$ many variables. Further, since both $\sigma_1$ and $\sigma_2$ set $y$ to $j$, they also satisfy the clause corresponding to the clause-edge $\{x[i], y[j]\}$. Thus, overall, both $\sigma_1$ and $\sigma_2$ satisfy all clauses corresponding to the clause-edges of $C$, and they differ on exactly $|vars(C_1)|=\mu$ many of its variables. 

For example, consider the path-like component shown in \cref{fig:graph for Max Differ 2-SAT}, and let $\mu=3$. We traverse the corresponding path starting from the terminal vertex $u[0]$, and we stop at $w[1]$, i.e., once three variables have been seen. Now, removing the clause-edge $\{w[1], f[1]\}$ results in two path-like components: one on variables $u$, $v$, $w$, and the other on variables $f$, $g$, $h$ $\big($see \cref{fig: path-like comp to two small path-like comps}$\big)$. Then, i) $\sigma_1$ sets $u$ to $0$, $v$ to $1$, $w$ to $0$, ii) $\sigma_2$ sets $u$ to $1$, $v$ to $0$, $w$ to $1$, and iii) both $\sigma_1$ and $\sigma_2$ set $f$ to $1$, $g$ to $1$, $h$ to $1$. So, $\sigma_1$ and $\sigma_2$ differ on $u$, $v$, $w$ and overlap on $f$, $g$, $h$.

\begin{figure}
\begin{center}
\includegraphics[scale=0.7]{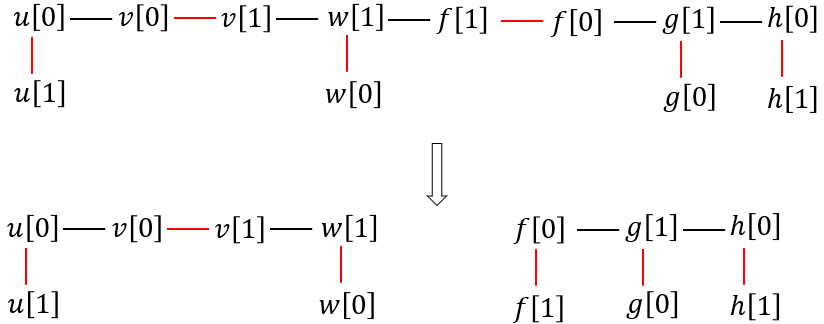}
\end{center}
\caption{This figure is with reference to Case 1 in the proof of \cref{Exact Differ 2-SAT poly-time on (2 2)-CNF formulas}.
Note that the removal of the clause-edge $\{w[1], f[1]\}$ from the path-like component shown in \cref{fig:graph for Max Differ 2-SAT} breaks it into two smaller path-like components.}
\label{fig: path-like comp to two small path-like comps}
\end{figure}

\textbf{Case 2}. $C$ is an even-cycle-like component with pendants.

As argued in Case 2 in the proof of \cref{Max Differ 2-SAT poly-time (2 2)-CNF formulas}, we already know that there is a pair of assignments such that both of them satisfy all clauses corresponding to the clause-edges of $C$, and they differ on all its $|vars(C)|$ many variables. Next, consider any $0\leq \mu\leq |vars(C)|-1$. Arbitrarily fix a vertex $v[i]$ on the cycle corresponding to $C$ such that its neighbor along the matching-edge $\big($i.e., $v[1-i]\big)$ does not belong to the cycle, and rather hangs as a pendant vertex attached to $v[i]$. Then, we construct assignments $\sigma_1$ and $\sigma_2$ as follows: Both $\sigma_1$ and $\sigma_2$ set $v$ to $i$, thereby satisfying the two clauses that correspond to the two clause-edges incident to $v[i]$ along the cycle. Then, since $C\setminus \big\{v[i], v[1-i]\big\}$ is a path-like component, we apply the analysis of Case 1 on it $\big($with the same $\mu\big)$ to assign truth-values to the variables in $vars(C)\setminus \{v\}$ under $\sigma_1$ and $\sigma_2$. Note that both $\sigma_1$ and $\sigma_2$ satisfy all clauses corresponding to the clause-edges of $C$. Also, they overlap on $v$, and they differ on exactly $\mu$ many variables in $vars(C)\setminus \{v\}$.

\textbf{Case 3}. $C$ is an odd-cycle-like component.

As argued in Case 3 in the proof of \cref{Max Differ 2-SAT poly-time (2 2)-CNF formulas}, we already know that at least one overlap is unavoidable amongst the variables in $vars(C)$ for any pair of assignments that satisfy the clauses corresponding to the clause-edges of $C$. Next, consider any $0\leq \mu\leq |vars(C)|-1$. Since the cycle corresponding to $C$ has odd-length, it must contain a vertex $v[i]$ whose neighbor $v[1-i]$ hangs as a pendant vertex attached to it. Then, using the same argument as that in Case 2, we obtain a pair of assignments such that both of them satisfy all clauses corresponding to the clause-edges of $C$, and they differ on exactly $\mu$ many variables in $vars(C)$. 

\textbf{Case 4}. $C$ is an even-length cycle.

Observe that the vertices in $C$, in order of their appearance along the cycle, must be of the following form: $x_1[i_1]$, $x_1[1-i_1]$, $x_2[i_2]$, $x_2[1-i_2]$, $\ldots$, $x_{\ell}[i_{\ell}]$, $x_{\ell}[1-i_{\ell}]$ $\big($see \cref{fig: even-length cycle}$\big)$. Here, $x_1$, $x_2$, $\ldots$, $x_{\ell}$ are variables, and $i_1, \ldots, i_{\ell}\in\{0,1\}$. Note that setting $x_1$ to $i_1$ forces $x_2$ to be set to $i_2$ $\big($in order to satisfy the clause corresponding to the clause-edge $\{x_1[1-i_1], x_2[i_2]\}\big)$, which in turn forces $x_3$ to be set to $i_3$ $\big($in order to satisfy the clause corresponding to the clause-edge $\{x_2[1-i_2], x_3[i_3]\}\big)$, $\ldots$ and so on. Similarly, setting $x_1$ to $1-i_1$ forces $x_{\ell}$ to be set to $1-i_{\ell}$ $\big($in order to satisfy the clause corresponding to the clause-edge $\{x_{\ell}[1-i_{\ell}], x_1[i_1]\}\big)$, which in turn forces $x_{\ell-1}$ to be set to $1-i_{\ell-1}$ $\big($in order to satisfy the clause corresponding to the clause-edge $\{x_{\ell-1}[1-i_{\ell-1}], x_{\ell}[i_{\ell}]\}\big)$, $\ldots$ and so on. Therefore, it is clear that there are only two different assignments that satisfy the clauses corresponding to the $\ell$ clause-edges of $C$: i) one that sets $x_1$ to $i_1$, $x_2$ to $i_2$, $\ldots$, $x_{\ell}$ to $i_{\ell}$, and ii) the other that sets $x_1$ to $1-i_1$, $x_2$ to $1-i_2$, $\ldots$, $x_{\ell}$ to $1-i_{\ell}$. So, any pair of such assignments either i) differ on all $|vars(C)|$ many variables of $C$, or ii) overlap on all $|vars(C)|$ many variables of $C$. 

\begin{figure}
\begin{center}
\includegraphics[scale=0.7]{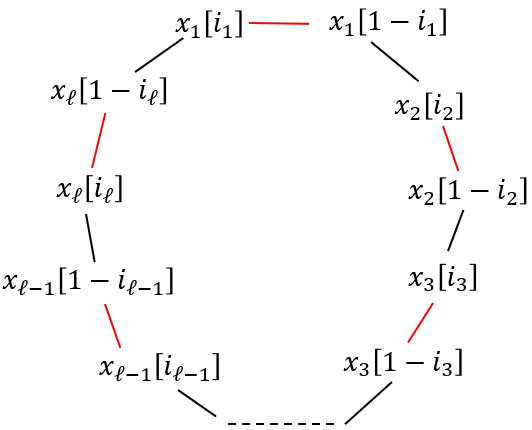}
\end{center}
\caption{This figure shows the even-length cycle $C$ in Case 4 in the proof of \cref{Exact Differ 2-SAT poly-time on (2 2)-CNF formulas}.}
\label{fig: even-length cycle}
\end{figure}
Thus, overall, based on Cases 1, 2, 3, 4, we know the following: i) Amongst the variables of all path-like components, even-cycle-like components with pendants and odd-cycle-like components of $G$, the number $d'$ of variables on which any two satisfying assignments can differ is allowed to take any value ranging from $0$ to $D$ (both inclusive), where
\begin{equation*}
D:= \underset{\substack{C:~C \text{ \footnotesize{is a path-like component, or an}}\\ \text{\footnotesize{even-cycle-like component with pendants}}}}{\sum}|vars(C)| ~~~ + \underset{\substack{C:~C \text{ \footnotesize{is an odd-cycle-}}\\ \text{\footnotesize{~~~~~~like component~~~~~}}}}{\sum}\big(|vars(C)|-1\big),
\end{equation*}
and ii) amongst the variables of all even-cycle components of $G$ $\big($say $C_1,C_2, \ldots, C_t\big)$, the number of variables on which any two satisfying assignments can differ is allowed to take any value that can be obtained as a sum of some elements of $\big\{|vars(C_1)|$, $|vars(C_2)|$, $\ldots$, $|vars(C_t)|\}$. So, for each $0\leq d'\leq D$, we run the algorithm for \textsc{Subset Sum problem} on the multi-set $\big\{|vars(C_1)|$, $|vars(C_2)|$, $\ldots$, $|vars(C_t)|\big\}$ with $d-d'$ as the target sum. We return YES if at least one of these runs returns YES; otherwise, we return NO. This proves \cref{Exact Differ 2-SAT poly-time on (2 2)-CNF formulas}.
\end{proof}

Looking at the parameterized complexity of \textsc{Exact Differ 2-SAT} and \textsc{Max Differ 2}-SAT with respect to the parameter $d$, we next prove the following hardness result. 
\begin{theorem}
\label{Exact Differ 2-SAT W[1]-hard in d} \textsc{Exact/Max Differ 2-SAT} is \WOH in the parameter $d$. 
\end{theorem}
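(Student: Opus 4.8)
The plan is a reduction from \textsc{Multicolored Clique}, which is \WOH when parameterized by the number $k$ of colour classes. The useful viewpoint is that a monotone $2$-CNF formula $\phi=\bigwedge_{pq\in E(H)}(x_p\vee x_q)$ is just a graph $H$ on its variable set, and an assignment satisfies $\phi$ exactly when the set of variables it sets to $0$ is an independent set of $H$. Consequently, two satisfying assignments of $\phi$ differ on at least $d$ variables (and, by trimming, can be made to differ on exactly $d$) if and only if $H$ has two disjoint independent sets whose union has size at least $d$, i.e.\ if and only if $H$ has an induced bipartite subgraph on at least $d$ vertices. So it suffices to construct, from a \textsc{Multicolored Clique} instance $(G,k)$ with colour classes $V_1,\dots,V_k$, a graph $H$ and an integer $d=\Theta(k^2)$ such that $H$ has an induced bipartite subgraph on $\ge d$ vertices iff $G$ has a multicoloured $k$-clique; setting $\phi:=\phi_H$ then resolves \textsc{Max Differ 2-SAT} and \textsc{Exact Differ 2-SAT} simultaneously, and $\phi_H$ is monotone by construction.

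The core of $H$ consists of one ``vertex-selector'' clique $X_i=\{x_v:v\in V_i\}$ per colour $i$, one ``edge-selector'' clique $Y_{ij}=\{y_e:e\text{ joins }V_i\text{ and }V_j\}$ per pair $i<j$, and ``consistency'' edges making $x_{v'}$ adjacent to $y_e$ whenever $v'\in V_i$ is not the endpoint of $e$ in $V_i$ (symmetrically for $V_j$); there are no edges between distinct $X_i$'s or between distinct $Y_{ij}$'s. An independent set of this core picks at most one vertex per colour and at most one edge per pair, and a picked edge for $\{i,j\}$ is forced to be incident to the picked vertices of colours $i$ and $j$; hence the core has independence number exactly $k+\binom{k}{2}$ when $G$ has a multicoloured clique and at most $k+\binom{k}{2}-1$ otherwise. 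The delicate point is that the induced-bipartite-subgraph number of the core can be nearly twice its independence number (two disjoint partial selections, one on each side), so $\phi_{\mathrm{core}}$ alone would not work. I fix this by adding an independent ``padding'' set $R$ of $M:=k^2+k+1$ fresh vertices, each adjacent to every selector vertex. Because $R$ is complete to all selectors, every independent set of $H$ lies entirely inside $R$ or entirely inside the core; hence for any two disjoint independent sets $A,B$ of $H$, up to swapping we may take $A\subseteq\mathrm{core}$ and $B\subseteq R$, giving $|A|+|B|\le\alpha(\mathrm{core})+M$, and this is tight. Thus the induced-bipartite-subgraph number of $H$ equals $M+\alpha(\mathrm{core})$, which is $M+k+\binom{k}{2}=:d$ exactly when $G$ has a multicoloured clique and at most $d-1$ otherwise.

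Putting this together, in the forward direction a multicoloured clique of $G$ yields a full consistent selection, an independent set of the core of size $k+\binom{k}{2}$, and together with $R$ it produces two disjoint independent sets of $H$ of total size $d$, i.e.\ two satisfying assignments of $\phi_H$ differing on exactly $d$ variables. Conversely, if $\phi_H$ has two satisfying assignments differing on at least $d$ variables, the corresponding disjoint independent sets $A,B$ satisfy $|A|+|B|\ge d$, so by the padding analysis (up to swapping) $B=R$ and $A$ is an independent set of the core of size at least $k+\binom{k}{2}$, hence a full consistent selection, hence a multicoloured clique of $G$. Since $d=\Theta(k^2)$ and the construction is polynomial-time, this shows both problems are \WOH in the parameter $d$, and since every clause of $\phi_H$ is a disjunction of two positive literals, the hardness already holds for monotone formulas. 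I expect the main obstacle to be precisely the induced-bipartite-subgraph analysis of $H$: checking that no arrangement using both sides of the bipartition beats $M+\bigl(k+\binom{k}{2}-1\bigr)$ in a ``no''-instance, which is exactly where the complete join between $R$ and the selectors is essential; the independence-number analysis of the selector/consistency gadget is routine by comparison.
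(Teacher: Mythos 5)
Your reduction is correct, but it takes a genuinely different route from the paper. The paper reduces from \textsc{Independent Set}: for each vertex $v$ it creates two variables $x_v,y_v$, adds clauses $x_u\vee x_v$ and $y_u\vee y_v$ for every edge $uv$, adds $x_u\vee y_v$ for \emph{every} pair of vertices, and sets $d=2k$; the correctness argument is a short direct case analysis showing that the larger of the two ``difference classes'' $D_1,D_2$ must consist purely of $x$-variables or purely of $y$-variables and must induce an independent set. You instead reduce from \textsc{Multicolored Clique}, first isolating the clean combinatorial equivalence that for a monotone $2$-CNF formula $\phi_H$ the maximum Hamming distance between satisfying assignments equals the maximum size of an induced bipartite subgraph of $H$ (two disjoint independent sets), and then encoding the clique instance with selector cliques plus a padding independent set $R$ joined completely to the core. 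Your padding analysis goes through: every independent set of $H$ lies wholly in $R$ or wholly in the core, and since $\alpha(\mathrm{core})\le k+\binom{k}{2}\le M$, all three cases (core/core, core/$R$, $R$/$R$) respect the bound $M+\alpha(\mathrm{core})$, so the threshold $d=M+k+\binom{k}{2}$ is met exactly when a multicoloured clique exists; the trimming observation also correctly handles the \textsc{Exact} variant, and both constructions yield monotone formulas. The trade-off is in the parameter: the paper gets $d=2k$, linear in the source parameter, which gives a stronger quantitative conclusion (e.g.\ for ETH-based lower bounds of the form $n^{o(d)}$ rather than $n^{o(\sqrt{d})}$), whereas your $d=\Theta(k^2)$ suffices for \WO-hardness but is quadratically weaker; in exchange, your argument is more modular and makes the underlying ``disjoint independent sets'' structure explicit rather than implicit.
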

\begin{proof}
We describe a reduction from \textsc{Independent Set}. Consider an instance $(G,k)$ of \textsc{Independent Set}. We construct a 2-CNF formula $\phi$ as follows: For every vertex $v\in V(G)$, introduce two variables $x_v$ and $y_v$; we refer to them as $x$-variable and $y$-variable respectively. For every edge $uv\in E(G)$, i) we add a clause that consists of the $x$-variables corresponding to the vertices $u$ and $v$, i.e., $x_u \vee x_v$, and ii) we add a clause that consists of the $y$-variables corresponding to the vertices $u$ and $v$, i.e., $y_u\vee y_v$. For every pair of vertices $u,v \in V(G)$, we add a clause that consists of the $x$-variable corresponding to $u$ and the $y$-variable corresponding to $v$, i.e., $x_u \vee y_v$. We set $d=2k$. 

We prove that $(G,k)$ is a YES instance of \textsc{Independent Set} if and only if $(\phi, d)$ is a YES instance of \textsc{Exact Differ 2-SAT}. At a high level, we argue this equivalence as follows: In the forward direction, we show that the two desired satisfying assignments are i) the assignment that assigns $0$ to all $x$-variables corresponding to the vertices of the given independent set, and $1$ to the remaining variables, and ii) the assignment that assigns $0$ to all $y$-variables corresponding to the vertices of the given independent set, and $1$ to the remaining variables. In the reverse direction, we partition the set of variables on which the two given assignments differ into two parts: i) one part consists of those variables that are set to $1$ by the first assignment, and $0$ by the second assignment, and ii) the other part consists of those variables that are set to $0$ by the first assignment, and $1$ by the second assignment. Then, we show that at least one of these two parts has the desired size, and it is not a mix of $x$-variables and $y$-variables. That is, either it has only $x$-variables, or it has only $y$-variables. Finally, we show that the  vertices that correspond to the variables in this part form the desired independent set. We make this argument precise below.

\textbf{Forward direction}. Suppose that $(G,k)$ is a YES instance of \textsc{Independent Set}. That is, $G$ has a $k$-sized independent set $X$. Let $\sigma_1$ and $\sigma_2$ be assignments of $\phi$ defined as follows: For every vertex $v\in X$, i) $\sigma_1$ sets $x_v$ to $0$, $\sigma_2$ sets $x_v$ to $1$, and ii) $\sigma_1$ sets $y_v$ to $1$, $\sigma_2$ sets $y_v$ to $0$. For every vertex $v\in V(G)\setminus X$, both $\sigma_1$ and $\sigma_2$ set $x_v$ and $y_v$ to $1$. Note that $\sigma_1$ and $\sigma_2$ differ on $2|X|=2k$ variables, namely $x_v\big\lvert_{v\in X}$ and $y_v\big\lvert_{v\in X}$. Now, we show that $\sigma_1$ and $\sigma_2$ are satisfying assignments of $\phi$.

First, we argue that $\sigma_1$ and $\sigma_2$ satisfy the clauses that were added corresponding to the edges of $G$. Consider an edge $uv\in E(G)$. The clauses added in $\phi$ corresponding to this edge are $x_u\vee x_v$ and $y_u\vee y_v$. Since $X$ is an independent set, no edge of $G$ has both its endpoints inside $X$. So, at least one endpoint of the edge $uv$ must be outside $X$. Without loss of generality, assume that $u\in V(G)\setminus X$. Then, both $\sigma_1$ and $\sigma_2$ set $x_u$ and $y_u$ to $1$, thereby satisfying the clauses $x_u\vee x_v$ and $y_u\vee y_v$. 

Next, we argue that $\sigma_1$ and $\sigma_2$ satisfy the clauses that were added corresponding to the vertex-pairs of $G$. Consider a pair of vertices $u,v\in V(G)$. The clause added in $\phi$ corresponding to this vertex-pair is $x_u\vee y_v$. Suppose that at least one of $u$ and $v$ lies outside $X$. If $u\in V(G)\setminus X$, then both $\sigma_1$ and $\sigma_2$ set $x_u$ to $1$. Similarly, if $v\in V(G)\setminus X$, then both $\sigma_1$ and $\sigma_2$ set $y_v$ to $1$. Therefore, in either case, $\sigma_1$ and $\sigma_2$ satisfy the clause $x_u\vee y_v$. Now, assume that both $u$ and $v$ lie inside $X$. Then, since $\sigma_1$ sets $y_v$ to $1$ and $\sigma_2$ sets $x_u$ to $1$, both these assignments satisfy the clause $x_u\vee y_v$. 

Hence, $(\phi, 2k)$ is a YES instance of \textsc{Exact Differ 2-SAT}.

\textbf{Reverse direction}. Suppose that $(\phi, 2k)$ is a YES instance of \textsc{Exact Differ 2-SAT}. That is, there are satisfying assignments $\sigma_1$ and $\sigma_2$ of $\phi$ that differ on $2k$ variables. We partition the set of variables on which $\sigma_1$ and $\sigma_2$ differ as $D_1\uplus D_2$, where i) $D_1$ denotes the set of those variables that are set to $1$ by $\sigma_1$ and $0$ by $\sigma_2$, and ii) $D_2$ denotes the set of those variables that are set to $0$ by $\sigma_1$ and $1$ by $\sigma_2$. Since $|D_1|+|D_2|= 2k$, at least one of $D_1$ and $D_2$ has size at least $k$. Without loss of generality, assume that $|D_1|\geq k$. 

Now, we argue that either all variables in $D_1$ are $x$-variables, or all variables in $D_1$ are $y$-variables. For the sake of contradiction, assume that $x_u\in D_1$ and $y_v\in D_1$ for some vertices $u,v \in V(G)$. That is, i) $\sigma_1$ sets both $x_u$ and $y_v$ to $1$, and ii) $\sigma_2$ sets both $x_u$ and $y_v$ to $0$. Then, $\sigma_2$ fails to satisfy the clause $x_u\vee y_v$ added in $\phi$ corresponding to the vertex-pair $u, v$. However, this is not possible because $\sigma_2$ is given to be a satisfying assignment of $\phi$. 

First, suppose that all variables in $D_1$ are $x$-variables. Let $X:=\big\{u\in V(G)~|~x_u\in D_1\big\}$. Note that $|X|=|D_1|\geq k$. We show that $X$ is an independent set of $G$. For the sake of contradiction, assume that there is an edge $uv\in E(G)$ with both endpoints in $X$. That is, i) $\sigma_1$ sets both $x_u$ and $x_v$ to $1$, and ii) $\sigma_2$ sets both $x_u$ and $x_v$ to $0$. Then, $\sigma_2$ fails to satisfy the clause $x_u\vee x_v$ added in $\phi$ corresponding to the edge $uv$. However, this is not possible because $\sigma_2$ is given to be a satisfying assignment of $\phi$.   

Next, suppose that all variables in $D_1$ are $y$-variables. Let $Y:=\big\{u\in V(G)~|~y_u\in D_1\big\}$. Note that $|Y|=|D_1|\geq k$. We show that $Y$ is an independent set of $G$. For the sake of contradiction, assume that there is an edge $uv\in E(G)$ with both endpoints in $Y$. That is, i) $\sigma_1$ sets both $y_u$ and $y_v$ to $1$, and ii) $\sigma_2$ sets both $y_u$ and $y_v$ to $0$. Then, $\sigma_2$ fails to satisfy the clause $y_u\vee y_v$ added in $\phi$ corresponding to the edge $uv$. However, this is not possible because $\sigma_2$ is given to be a satisfying assignment of $\phi$. 

Hence, $(G,k)$ is a YES instance of \textsc{Independent Set}.

This reduction also works with \textsc{Max Differ 2}-SAT as the target problem. So, both \textsc{Exact Differ 2-SAT} and \textsc{Max Differ 2-SAT} are \WOH in the parameter $d$. This proves \cref{Exact Differ 2-SAT W[1]-hard in d}.
\end{proof}

\section{Hitting formulas}
\label{sec: hitting formulas}
In this section, we consider hitting formulas, and we show that both its diverse variants, i.e., \textsc{Exact Differ Hitting-SAT} and \textsc{Max Differ Hitting-SAT}, are polynomial-time solvable. 
\begin{theorem}
\label{Exact Differ Hitting-SAT poly-time}
\textsc{Exact/Max Differ Hitting-SAT} admits a polynomial-time algorithm.
\end{theorem}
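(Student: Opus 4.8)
The plan is to reduce the question to \emph{counting}: for every target Hamming distance $t$, I would compute the number $N_t$ of ordered pairs $(\sigma_1,\sigma_2)$ of satisfying assignments of $\phi$ with exactly $t$ disagreements, and then read off the answer from which $N_t$ are positive. Everything hinges on the defining property of hitting formulas, which I record first: every assignment falsifies \emph{at most one} clause. (If $\sigma$ falsified clauses $C$ and $C'$, then on the variable that occurs positively in one and negatively in the other, $\sigma$ would have to make both a positive and a negative literal on that variable false, which is impossible.) Consequently, for every assignment $\sigma$,
\[
\big[\sigma \text{ satisfies } \phi\big] \;=\; 1 - \sum_{C}\big[\sigma \text{ falsifies } C\big],
\]
where $[\cdot]$ is the $0/1$-valued indicator and the sum is over the clauses of $\phi$; here we assume no clause is a tautology, since tautological clauses are always satisfied and may be deleted, affecting neither the satisfying assignments nor the hitting property.

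Next I would set up the combinatorial data. For a clause $C$, let $a_C$ be the unique assignment to $vars(C)$ that falsifies $C$; the assignments of $\phi$ falsifying $C$ are exactly those agreeing with $a_C$ on $vars(C)$, a subcube of size $2^{\,n-|vars(C)|}$. For clauses $C,C'$, let $\delta(C,C')$ be the number of variables in $vars(C)\cap vars(C')$ on which $a_C$ and $a_{C'}$ disagree; note $\delta(C,C)=0$, while $\delta(C,C')\ge 1$ whenever $C\ne C'$, again by the hitting property. Expanding the product $[\sigma_1\text{ sat}]\cdot[\sigma_2\text{ sat}]$ via the identity above and summing over all $2^n\binom{n}{t}$ ordered pairs at distance exactly $t$ yields four sums: the constant sum contributes $2^n\binom{n}{t}$; the sum of $[\sigma_1\text{ falsifies }C]$ contributes $\binom{n}{t}\sum_{C}2^{\,n-|vars(C)|}$ (choose $\sigma_1$ in $C$'s subcube, then $\sigma_2$ at distance $t$), and symmetrically for $\sigma_2$; and the cross sum $\sum_{C,C'}[\sigma_1\text{ falsifies }C][\sigma_2\text{ falsifies }C']$ calls for the number of ordered pairs with $\sigma_1$ in $C$'s subcube, $\sigma_2$ in $C'$'s subcube, at distance exactly $t$. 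For that last quantity I would partition the variables into four blocks by membership in $vars(C)$ and in $vars(C')$: on $vars(C)\cap vars(C')$ both assignments are forced and disagree on exactly $\delta(C,C')$ coordinates; on $vars(C)\setminus vars(C')$ only $\sigma_1$ is forced; on $vars(C')\setminus vars(C)$ only $\sigma_2$ is forced; and outside $vars(C)\cup vars(C')$ both are free. Counting realizations with exactly $j$ disagreements in each block and collapsing the product of generating functions via $(1+x)^a(1+x)^b(1+x)^c=(1+x)^{a+b+c}$, this count is exactly $2^{\,n-|vars(C)\cup vars(C')|}\binom{\,n-|vars(C)\cap vars(C')|\,}{\,t-\delta(C,C')\,}$, with the convention $\binom{m}{j}=0$ for $j\notin\{0,\dots,m\}$. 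Hence
\[
N_t \;=\; 2^n\binom{n}{t} \;-\; 2\binom{n}{t}\sum_{C}2^{\,n-|vars(C)|} \;+\; \sum_{C,C'} 2^{\,n-|vars(C)\cup vars(C')|}\binom{\,n-|vars(C)\cap vars(C')|\,}{\,t-\delta(C,C')\,},
\]
the outer sum ranging over all ordered pairs of clauses, $C=C'$ included.

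Finally, every term on the right is computable in polynomial time: there are polynomially many clause pairs, the block sizes and $\delta(C,C')$ are obtained in linear time per pair, and the powers of two and binomial coefficients are integers of $O(n)$ bits, so $N_t$ is computable in polynomial time for any given $t$. For \textsc{Exact Differ Hitting-SAT} on input $(\phi,d)$ I would compute $N_d$ and return YES iff $N_d>0$: for $d\ge 1$ this is precisely the existence of two distinct satisfying assignments differing on $d$ variables, and for $d=0$, $N_0$ counts the satisfying assignments, matching the convention that $d=0$ is plain SAT. For \textsc{Max Differ Hitting-SAT} I would compute $N_t$ for all $t$ with $d\le t\le n$ and return YES iff some $N_t>0$. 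The one step needing care is the block-counting for the cross term — in particular pinning down the forced-disagreement count $\delta(C,C')$ and handling the boundary cases ($C=C'$, or $t-\delta(C,C')$ out of range); once the generating-function collapse is in place, the rest is bookkeeping.
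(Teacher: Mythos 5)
Your proposal is correct and follows essentially the same route as the paper: inclusion--exclusion over the fact that an assignment of a hitting formula falsifies at most one clause, with the cross term computed by partitioning the variables into the four blocks determined by $vars(C)$ and $vars(C')$. The only difference is cosmetic: you collapse the paper's triple convolution $\sum_{d_1+d_2+d_3}\binom{a}{d_1}\binom{b}{d_2}\binom{c}{d_3}$ into a single binomial coefficient via Vandermonde's identity, which is a valid simplification.
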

\begin{proof}
Consider an instance $(\phi,d)$ of \textsc{Exact Differ Hitting-SAT}, where $\phi$ is a hitting formula with $m$ clauses (say $C_1,\ldots, C_m$) on $n$ variables. Let $vars(\phi)$ denote the set of all $n$ variables of $\phi$. For every $1\leq i\leq m$, let $vars(C_i)$ denote the set of all variables that appear in the clause $C_i$. For every $1\leq i,j\leq m$, let $\lambda(i,j)$ denote the number of variables $x\in vars(C_i)\cap vars(C_j)$ such that $x$ appears as a positive literal in one clause, and as a negative literal in the other clause. 

Note that 
\begin{align*}
& \big|\big\{(\sigma_1,\sigma_2)~|~\sigma_1 \mbox{ and } \sigma_2 \mbox{ differ on } d \mbox{ variables, and both } \sigma_1 \mbox{ and } \sigma_2 \mbox{ satisfy } \phi\big\}\big|\\
& = \big|\big\{(\sigma_1,\sigma_2)~|~\sigma_1 \mbox{ and } \sigma_2 \mbox{ differ on } d \mbox{ variables}\big\}\big|\\
& \hspace{0.3 cm} - \big|\big\{(\sigma_1,\sigma_2)~|~\sigma_1 \mbox{ and } \sigma_2 \mbox{ differ on } d \mbox{ variables, and } \sigma_1 \mbox{ falsifies } \phi\big\}\big| \\
& \hspace{0.3 cm} - \big|\big\{(\sigma_1,\sigma_2)~|~\sigma_1 \mbox{ and } \sigma_2 \mbox{ differ on } d \mbox{ variables, and } \sigma_2 \mbox{ falsifies } \phi\big\}\big| \\
& \hspace{0.3 cm} + \big|\big\{(\sigma_1,\sigma_2)~|~\sigma_1 \mbox{ and } \sigma_2 \mbox{ differ on } d \mbox{ variables, and both } \sigma_1 \mbox{ and } \sigma_2 \mbox{ falsify } \phi\big\}\big| \\
& = 2^n\cdot \binom{n}{d} 
- \big|\big\{\sigma_1~|~\sigma_1 \mbox{ falsifies } \phi\big\}\big|\cdot \binom{n}{d} - \big|\big\{\sigma_2~|~\sigma_2 \mbox{ falsifies } \phi\big\}\big|\cdot \binom{n}{d}\\
& \hspace{0.3 cm}
+ \sum_{i=1}^m\sum_{j=1}^m \underbrace{\big|\big\{(\sigma_1,\sigma_2)~|~\sigma_1 \mbox{ and } \sigma_2 \mbox{ differ on } d \mbox{ variables,  } \sigma_1 \mbox{ falsifies } C_i \mbox{, and }\sigma_2 \mbox{ falsifies } C_j\big\}\big|}_{\alpha(i,j)} \\
& = \Big(2^n - 2\cdot \sum_{i=1}^m 2^{n-|vars(C_i)|}\Big)\cdot \binom{n}{d} + \sum_{i=1}^m \sum_{j=1}^m \alpha(i,j)
\end{align*}
Consider any $1\leq i,j\leq m$. Let us derive an expression for $\alpha(i,j)$. That is, let us count the number of pairs $(\sigma_1, \sigma_2)$ of assignments of $\phi$ such that $\sigma_1$ and $\sigma_2$ differ on $d$ variables, $\sigma_1$ falsifies $C_i$, and $\sigma_2$ falsifies $C_j$. Since $\sigma_1$ falsifies $C_i$, it must set every variable in $vars(C_i)$ such that its corresponding literal in the clause $C_i$ is falsified. That is, for every $x\in vars(C_i)$, if $x$ appears as a positive literal in $C_i$, then $\sigma_1$ must set $x$ to $0$; otherwise, it must set $x$ to $1$. Similarly, since $\sigma_2$ falsifies $C_j$, it must set every variable in $vars(C_j)$ such that its corresponding literal in the clause $C_j$ is falsified. 

There is just one choice for the truth values assigned to the variables in $vars(C_i)\cap vars(C_j)$ by $\sigma_1$ and $\sigma_2$. Also, note that for every variable $x$ in $vars(C_i)\cap vars (C_j)$, if $x$ appears as a positive literal in one clause and as a negative literal in the other clause, then $\sigma_1$ and $\sigma_2$ differ on $x$; otherwise, they overlap on $x$. So, overall, $\sigma_1$ and $\sigma_2$ differ on $\lambda(i,j)$ variables amongst the variables in $vars(C_i)\cap vars(C_j)$. 

Next, we go over all possible choices for the numbers of variables on which $\sigma_1$ and $\sigma_2$ differ (say $d_1, d_2$ and $d_3$ many variables) amongst the variables in $vars(C_i)\setminus vars(C_j), vars(C_j)\setminus vars(C_i)$ and $vars(\phi)\setminus (vars(C_i)\cup vars(C_j))$ respectively. Since $\sigma_1$ and $\sigma_2$ differ on $d$ variables in total, we have $\lambda(i,j)+d_1+d_2+d_3 = d$. 

There is just one choice for the truth values assigned to the variables in $vars(C_i)\setminus vars(C_j)$ by $\sigma_1$, and there are $\binom{|vars(C_i)\setminus vars(C_j)|}{d_1}$ choices for the truth values assigned to the variables in $vars(C_i)\setminus vars(C_j)$ by $\sigma_2$. Similarly, there is just one choice for the truth values assigned to the variables in $vars(C_j)\setminus vars(C_i)$ by $\sigma_2$, and there are $\binom{|vars(C_j)\setminus vars(C_i)|}{d_2}$ choices for the truth values assigned to the variables in $vars(C_j)\setminus vars(C_i)$ by $\sigma_1$. 

There are $\binom{n-|vars(C_i)\cup vars(C_j)|}{d_3}$ choices for the $d_3$ variables on which $\sigma_1$ and $\sigma_2$ differ amongst the variables in $vars(\phi)\setminus (vars(C_i)\cup vars(C_j))$. For each variable $x$ amongst these $d_3$ variables, there are two ways in which $\sigma_1$ and $\sigma_2$ can assign truth values to $x$. That is, either i) $\sigma_1$ sets $x$ to $0$ and $\sigma_2$ sets $x$ to $1$, or ii) $\sigma_1$ sets $x$ to $1$ and $\sigma_2$ sets $x$ to $0$. For each variable $x$ amongst the remaining $n-|vars(C_i)\cup vars(C_j)|-d_3$ variables, there are again two ways in which $\sigma_1$ and $\sigma_2$ can assign truth values to $x$. That is, either i) both $\sigma_1$ and $\sigma_2$ set $x$ to $1$, or ii) both $\sigma_1$ and $\sigma_2$ set $x$ to $0$. So, overall, the number of  ways in which $\sigma_1$ and $\sigma_2$ can assign truth values to the variables in $vars(\phi)\setminus (vars(C_i)\cup vars(C_j))$ is $\binom{n-|vars(C_i)\cup vars(C_j)|}{d_3}\cdot 2^{d_3}\cdot 2^{n-|vars(C_i)\cup vars(C_j)|-d_3}$.

Thus, we get the following expression for $\alpha(i,j)$:
\begin{equation*}
\begin{split}
\resizebox{14.6 cm}{!}{
$2^{n-|vars(C_i)\cup vars(C_j)|} 
\cdot\underset{\substack{d_1, d_2, d_3\geq 0:\\ d_1+d_2+d_3 = d-\lambda(i,j)}}{\sum}\binom{|vars(C_i)\setminus vars(C_j)|}{d_1} \binom{|vars(C_j)\setminus vars(C_i)|}{d_2}\binom{n-|vars(C_i)\cup vars(C_j)|}{d_3}$}
\end{split}
\end{equation*}
Plugging this into the previously obtained equality, we get an expression to count the number of pairs $(\sigma_1, \sigma_2)$ of satisfying assignments of $\phi$ that differ on $d$ variables. Note that this expression can be evaluated in polynomial-time. If the count so obtained is non-zero, we return YES; otherwise, we return NO. So, \textsc{Exact Differ Hitting-SAT} is polynomial-time solvable.  

Note that $(\phi,d)$ is a YES instance of \textsc{Max Differ Hitting-SAT} if and only if at least one of $(\phi, d), (\phi, d+1), \ldots, (\phi, n)$ is a YES instance of \textsc{Exact Differ Hitting-SAT}. Thus, as \textsc{Exact Differ Hitting-SAT} is polynomial-time solvable, so is \textsc{Max Differ Hitting-SAT}. This proves \cref{Exact Differ Hitting-SAT poly-time}.
\end{proof}

\section{Concluding remarks}
In this work, we undertook a complexity-theoretic study of the problem of finding a diverse pair of satisfying assignments of a given Boolean formula, when restricted to affine formulas, $2$-CNF formulas and hitting formulas. This problem can also be studied for i) other classes of formulas on which SAT is polynomial-time solvable,
ii) more than two solutions, and iii) other notions of distance between assignments. Also, its parameterized complexity can be studied in structural parameters of graphs associated with the input Boolean formula; some examples include primal, dual,
 conflict, incidence and consensus graphs \cite{ganian2021new}. An immediate open question is to resolve the parameterized complexity of \textsc{Exact/Max Differ 2-SAT} in the parameter $n-d$\footnote{This question has been recently addressed by Gima et. al. in \cite{gima2024computing}.}. Also, while our polynomial-time algorithm for Hitting formulas counts the number of solution pairs, it is not clear whether it can also be used to output a solution pair\footnote{This came up as a post-talk question from audience during ISAAC 2024.}.

\bibliography{lipics-v2021-sample-article}

\end{document}